\documentclass[UKenglish,cleveref, autoref, thm-restate, algorithmic, algorithm]{lipics-v2021}
\usepackage[ruled, noend, algo2e]{algorithm2e}
\usepackage[noend]{algorithmic} 

\usepackage{amsmath, amsfonts, amssymb}
\usepackage{todonotes}
\usepackage{booktabs}
\usepackage{numprint}
\usepackage{changepage}
\usepackage{tcolorbox}
\usepackage{comment}
\usepackage{enumitem}
\usepackage{enumerate}
\usepackage{enumitem}

\usepackage[table]{xcolor}

\usepackage[thinlines]{easytable}
\usepackage{makecell}
\usepackage{arydshln}
\nolinenumbers

\usepackage{hyperref}

\usepackage[T1]{fontenc}
\usepackage{tikz}
\usetikzlibrary{shapes.geometric, arrows.meta, positioning, calc, backgrounds}

\newcommand{\set}[1]{\{ #1 \}}

\newcommand{\R}{\mathbb{R}}
\newcommand{\N}{\mathbb{N}}

\newcommand{\E}{\mathbb{E}}

\newtheorem{invariant}{Invariant}

\definecolor{colHard}{RGB}{250, 235, 235} 
\definecolor{colTwo}{RGB}{235, 245, 255}  
\definecolor{colOne}{RGB}{235, 255, 235}  
\definecolor{colBorder}{RGB}{0, 0, 0}     

\tikzset{
    problem/.style={
        draw=black, 
        fill=white, 
        rectangle, 
        rounded corners=2pt, 
        align=center, 
        minimum height=1.8em, 
        inner sep=3pt,
        font=\scriptsize\sffamily, 
        line width=0.5pt
    },
    classbox/.style={
        draw=colBorder, 
        line width=0.8pt,
        rectangle,
        rounded corners=0pt,
        inner sep=0pt,       
        minimum width=10.5cm, 
        align=center
    },
    classheader/.style={
        anchor=north west, 
        font=\scshape, 
        text=black,
        xshift=3pt,
        yshift=-3pt
    },
    reduction/.style={
        -{Stealth[length=2mm, width=1.5mm]}, 
        thick, 
        draw=black!70,
        smooth
    }
}

\ccsdesc[500]{Theory of computation~Computational geometry}

\keywords{Planar geometry, data structures, algebraic decision trees, algorithms with advice.}

\title{When is presorting enough? Breaking the $O(n \log n)$ barrier for geometric data structures} 
\titlerunning{When is presorting enough? Breaking the $O(n \log n)$ barrier for geometric data structures}

\title{The Presort Hierarchy for Geometric Problems} 
\titlerunning{The Presort Hierarchy for Geometric Problems}

\author{Ivor van der Hoog}{IT University of Copenhagen, Denmark}{ivva@itu.dk}{https://orcid.org/0009-0006-2624-0231}{}

\author{Eva Rotenberg}{IT University of Copenhagen, Denmark}{erot@itu.dk}{https://orcid.org/0000-0001-5853-7909}{}

\author{Jack Spalding-Jamieson}{Independent}{jacksj@uwaterloo.ca}{https://orcid.org/0000-0002-1209-4345}{}

\author{Lasse Wulf}{IT University of Copenhagen, Denmark}{lasw@itu.dk}{https://orcid.org/0000-0001-7139-4092}{}

\newcommand{\mysubpara}[1]{%
  \par\vspace{0.15\baselineskip}%
  \noindent\textsf{\textbf{#1}}\hspace{0.5em}%
}
\hideLIPIcs
\authorrunning{Ivor van der Hoog, Eva Rotenberg, Jack Spalding-Jamieson, Lasse Wulf} 

\acknowledgements{We are thankful to Da Wei Zheng for helpful discussions on extending and simulating Word RAM operations. We are thankful to Jeff Erickson for pointing us to the lower bound of Seidel for the 3D Convex Hull problem.}

\funding{This work was supported by the the VILLUM Foundation grant (VIL37507)
``Efficient Recomputations for Changeful Problems''.}

\begin{document}

\maketitle

\begin{abstract}
Many fundamental problems in computational geometry admit no algorithm running in $o(n \log n)$ time for $n$ planar input points, via classical reductions from sorting. Prominent examples include the computation of convex hulls, quadtrees, onion layer decompositions, Euclidean minimum spanning trees, KD-trees, Voronoi diagrams, and decremental closest-pair.

A classical result shows that, given $n$ points sorted along a single direction, the convex hull can be constructed in linear time. Subsequent works established that for all of the other above problems, this information does not suffice.
In 1989, Aggarwal, Guibas, Saxe, and Shor asked: Under which conditions can a Voronoi diagram be computed in $o(n \log n)$ time? Since then, the question of whether sorting along \emph{two} directions enables a $o(n \log n)$-time algorithm for such problems has remained open and has been repeatedly mentioned in the literature.

In this paper, we introduce the Presort Hierarchy: A problem is \textbf{\texttt{1-Presortable}} if, given a sorting along one axis, it permits a (possibly randomised) $o(n \log n)$-time algorithm. It is \textbf{\texttt{2-Presortable}} if sortings along \emph{both} axes suffice. It is \textbf{\texttt{Presort-Hard}} otherwise.
Our main result is that quadtrees, and by extension Delaunay triangulations, Voronoi diagrams, and Euclidean minimum spanning trees, are \textbf{\texttt{2-Presortable}}: we present an algorithm with expected running time $O(n \sqrt{\log n})$.
This addresses the longstanding open problem posed by Aggarwal, Guibas, Saxe, and Shor (albeit randomised). 
We complement this result by showing that some of the other above geometric problems are also \textbf{\texttt{2-Presortable}}  or \textbf{\texttt{Presort-Hard}}.
\end{abstract}

\setcounter{page}{0}

\newpage
\section{Introduction}
Many problems in computational geometry have tight algorithms whose worst-case running time is known to be optimal at $\Theta(n \log n)$. 
For planar input, prominent examples include (but are not limited to) Pareto front, convex hulls, quadtrees, Delaunay triangulations, Voronoi diagrams, well-separated pair decompositions, Euclidean minimum spanning trees, planar triangulations, onion layer decompositions, KD-trees, and decremental closest-pair structures.

The above problems have $\Omega(n \log n)$ \emph{comparison-based} lower bounds: if, for inputs of size $n$, a problem admits $N$ distinct outputs, then any comparison-based model of computation (such as decision trees or the Real RAM) has an $\Omega(\log N)$ lower bound~\cite{Ben-Or83}. 
A natural question is under which conditions these planar data structures can be constructed in time faster than $O(n \log n)$. 
Andrew~\cite{andrew1979another} computes a planar convex hull in linear time when the points are given sorted by $x$-coordinate; the same idea yields a linear-time algorithm for the Pareto front. 
In contrast, Seidel~\cite{SEIDEL1985319} showed that in $\mathbb{R}^3$, an $\Omega(n \log n)$ lower bound persists 
even when the input is presorted along any arbitrarily large constant number of directions.

Aggarwal, Guibas, Saxe, and Shor~\cite{AggarwalGuibasSaxeShor1989} presented in 1989 a linear-time algorithm for constructing the Voronoi diagram of a convex polygon. Since the convexity of the input point set $P$ implies a strong form of presorting --- a known radial order along the boundary --- they asked which other kinds of presorting information suffice.
Chew and Fortune~\cite{Chew-Fortune-1997} subsequently showed that orthogonal presorting can lead to faster algorithms, but their argument works only for nonstandard Voronoi diagrams. In particular, they give an $O(n \log \log n)$-time algorithm, assuming that the input points are presorted along both the $x$- and $y$-coordinates, for constructing Voronoi diagrams under convex distance functions where balls are triangle-shaped. They explicitly ask whether the Euclidean Voronoi diagram can be constructed in $o(n \log n)$ time given sortings along two orthogonal directions.
Djidjev and Lingas~\cite{DJIDJEV1995VoronoiSorted} showed a comparison-based $\Omega(n \log n)$ lower bound for constructing the Euclidean Voronoi diagram even when the input points are sorted by $x$-coordinate. They also show the existence of a  clairvoyant algebraic decision tree (we formally define models of computation later) that, given the input points presorted along two directions, determines the Voronoi diagram with linear depth. This demonstrates that the standard comparison-based arguments do not rule out faster algorithms, and it motivates the following longstanding open question:

\vspace{-.5em}

\begin{quote}
Does there exist an $o(n \log n)$-time algorithm to construct the Euclidean Voronoi diagram of a planar point set that is sorted along both $x$ and $y$?
\end{quote}

\vspace{-.5em}

\mysubpara{Equivalence.}
By classical planar duality, a planar Delaunay triangulation can be transformed into a Voronoi diagram, and vice versa, in deterministic linear time.
Krznari\'{c} and Levcopoulos~\cite{Krznaric1998Computing} showed that a Delaunay triangulation can be transformed into a quadtree in linear time.
Buchin and Mulzer~\cite{BuchinMulzer2011Delaunay} gave a randomised linear-time algorithm for constructing a Delaunay triangulation from a quadtree.
Subsequently, L\"{o}ffler and Mulzer~\cite{Loffler2012Triangulating} established deterministic linear-time algorithms for both transformations.
Once a Delaunay triangulation is available, the Euclidean minimum spanning tree (EMST) can be computed in linear time, since the EMST is a subgraph of the Delaunay triangulation.
L\"{o}ffler and Mulzer~\cite{Loffler2012Triangulating} also show the other direction, as they can construct a Delaunay triangulation from an EMST in linear time. Finally, they also show that quadtrees and well-separated pair decompositions (WSPDs) are equivalent in linear time.
Taken together, these results imply that quadtrees, Delaunay triangulations, Voronoi diagrams, WSPDs, and EMSTs form a tightly connected family of geometric data structures: a sub-$O(n \log n)$-time algorithm for constructing any one of them immediately yields such an algorithm for all the others.
We refer to these structures as \emph{proximity structures}.
Under this equivalence, the open question can be phrased as follows:

\begin{quote}
    Does there exist an $o(n \log n)$-time algorithm to construct any proximity structure of a planar point set that is sorted along both the $x$- and $y$-coordinates?
\end{quote}

We discuss further related work in Appendix~\ref{app:closely}.

\mysubpara{Contribution and results.}
We answer the above question in a positive manner when allowing for randomisation.
In full generality, we argue for a \emph{hierarchy} of presortability.
A problem is called \texttt{1-Presortable} if it admits a (randomised) $o(n \log n)$-time algorithm when the input is sorted by $x$. 
A problem is \texttt{2-Presortable} if it is not \texttt{1-Presortable} and admits a (possibly randomised) $o(n \log n)$-time algorithm when the input is given sorted along both the $x$- and $y$-directions, together with the permutation relating these orderings.
Finally, a problem is \texttt{Presort-Hard} if it is not \texttt{2-Presortable}.
We note that the lower bound of Seidel~\cite{SEIDEL1985319} applies even when the input is sorted along any constant number of directions.
The same is true for the lower bounds proved in this paper; however, for ease of exposition, we adopt the above, less general definition of \texttt{Presort-Hard}.

Our main contribution is to show that constructing a quadtree is \texttt{2-Presortable} on the Real RAM, if randomisation is allowed.
Specifically, we present an algorithm with expected running time $O(n \sqrt{\log n})$ (Theorem~\ref{thm:quadtree}).
Combined with the known linear-time reductions between proximity structures, this result resolves the open problems posed by Aggarwal, Guibas, Saxe, and Shor~\cite{AggarwalGuibasSaxeShor1989} and by Djidjev and Lingas~\cite{DJIDJEV1995VoronoiSorted}, when randomisation is permitted.

We complement this positive result by placing several related geometric problems into this hierarchy (see Table~\ref{tab:results}).
We show that our result implies that finding the maximum empty circle is \texttt{2-Presortable}. 
We also show that KD-trees are \texttt{2-Presortable}. Finally, we show that orthogonal segment intersection is in this class.
We prove that computing the onion layer decomposition is \texttt{Presort-Hard}, which in turn implies that decremental convex hulls are \texttt{Presort-Hard}. As a consequence, onion layer decompositions do not belong to the family of proximity structures. Finally, we show that ordered $k$-closest pairs (and therefore decremental closest pairs) are \texttt{Presort-Hard}.

{
\begin{table}[h]
\centering
\small
\renewcommand{\arraystretch}{0.95}
\setlength{\dashlinedash}{0.5pt}
\setlength{\dashlinegap}{1.5pt}
\setlength{\arrayrulewidth}{0.4pt}
\begin{tabular}{|p{0.33\linewidth}|>{\centering\arraybackslash}p{0.40\linewidth}|>{\centering\arraybackslash}p{0.17\linewidth}|}
\hline
\textbf{Problem Name} & \textbf{Proof technique} & \textbf{Reference}\\
\hline
\rowcolor{green!20}
\multicolumn{3}{|c|}{\textbf{\texttt{Presort-Hard}}}\\
\hdashline
3D Convex Hull & Direct & \cite{SEIDEL1985319}\\
\hdashline
Ordered $k$-Closest-Pairs & Reduction~from Sorting & \cref{thm:k-closest-pair}\\
\hdashline
Decremental Closest-Pair & Reduction~from Ordered $k$-Closest-Pairs & \cref{cor:decremental-closest-pair}\\
\hdashline
Onion Layer Decomposition & Direct & \cref{thm:onion_layer}\\
\hdashline
Decremental Convex Hull & Reduction~from Onion Layer Decomp. & \cref{cor:decremental-convex-hull}\\
\hline
\rowcolor{green!20}
\multicolumn{3}{|c|}{\textbf{\texttt{$2$-Presortable}}}\\
\hdashline
Quadtree & Direct & \cref{thm:quadtree}\\
\hdashline
NN-graph & Reduction~to Quadtrees & \cref{cor:proximity}, \cite{Loffler2012Triangulating}\\
\hdashline
Delaunay Triangulation & Reduction~to NN-graphs & \cref{cor:proximity}, \cite{Loffler2012Triangulating}\\
\hdashline
Euclidean MST & Reduction~to Delaunay Triangulations &  \cref{cor:proximity}, \cite{Loffler2012Triangulating}\\
\hdashline
Voronoi Diagram & Reduction~to Delaunay Triangulations & \cref{cor:proximity}, \cite{Loffler2012Triangulating}\\
\hdashline
Maximum Empty Circle & Reduction to Delaunay Triangulations & \cref{thm:maximum}   \\ 
\hdashline
KD-tree & Reduction to Quadtrees & \cref{thm:kd-tree}\\
\hdashline
Orthogonal Segment Intersection & Direct & \cref{thm:isect} \\
\hline
\rowcolor{green!20}
\multicolumn{3}{|c|}{\textbf{\texttt{$1$-Presortable}}}\\
\hdashline
2D Convex Hull & Direct & \cite{andrew1979another} or \cite{graham72} \\
\hdashline
2D Pareto Front & Direct & adapting \cite{andrew1979another}, \cite{graham72}  \\
\hdashline
Diameter of a Point set & Reduction~to 2D Convex Hull & \cite{andrew1979another} or \cite{graham72} \\
\hdashline
A triangulation & Direct & \cref{thm:a-triangulation} \\
\hline
\end{tabular}
\caption{The known elements in our hierarchy. 
All the results with labelled theorems or corollaries are novel.
We also state whether a result is proven directly, or where it is derived from.
\vspace{-0.7cm}}
\label{tab:results}
\end{table}
}

\section{Preliminaries}
\label{sec:prelim}

In this paper, the algorithmic input consists of both real-valued and integer data.
In full generality, the real-valued input is a sequence
$A = (a_1, \ldots, a_r)$ of length $r$,
and the integer input is a sequence
$B = (b_1, \ldots, b_m)$ of length $m$.

\mysubpara{Presorting.}
In this paper, the integer input encodes a \emph{promise} about the real-valued data.
One such promise is what we call \emph{presortedness}.
We say that the input $(A,B)$ is a $1$-\emph{presorting} of a point set $P$ if $A$ stores $P$ sorted along a direction in $\mathbb{R}^2$. 
We say that the input $(A, B)$ is a $2$-\emph{presorting} of a point set $P$ if $A$ stores $P$ twice (sorted along orthogonal directions) and $B$ specifies the permutation between the two sortings. 
Formally, we say that $r = 2 \cdot n$ and that the real-valued sequence $A$ can be partitioned as $
A = A_x \cup A_y$ where $A_x$ sorts $P$ by $x$ and $A_y$ sorts $P$ by $y$. 
For notational convenience,  we index the points of $P$ so that
$A_x[i] = p_i$ for all $i \in [n]$.
We assume that the integer input $B$ then specifies the permutation from $A_x$ to $A_y$. 
Formally, it is an integer sequence $\pi$ such that $p_i = A_y[\pi(i)]$ for all $i \in [n]$.
We study for which geometric problems it is useful to have access to a presorting.

\mysubpara{The Real RAM.}
We provide both upper and lower bounds for geometric problems whose input contains both real-valued and integer data.
It is therefore necessary to define our computational models with some care.
Our primary model for upper bounds is the standard model for real-valued geometric computation, namely the \emph{Real RAM}.
We adopt the textbook definition of the Real RAM due to Preparata and Shamos~\cite{preparata2012computational}.
This classical model allows both real-valued and integer operations: it is a Random Access Machine (RAM) in which each memory cell may store either a real number or an integer.
The input $x \in \mathbb{R}^{r} \times \mathbb{N}^m$ occupies the first $r + m$ memory cells, in order.
A program operating on input $x$ is a finite sequence of instructions, each of which is of one of the following three types:\vspace{-0.3cm}
\begin{itemize}
    \item \emph{Primitives} manipulate storage cells, after which execution proceeds to the next instruction.
    \item \emph{Control-flow} specifies an integer $i$ and reads a memory cell containing a Boolean; if \texttt{True}, execution jumps to the $i$th instruction, and otherwise proceeds to the next instruction.
    \item \emph{Halts} terminate the program, and the memory at that time constitutes the output.
\end{itemize}
\noindent
A \emph{Primitive} takes two memory cells as input and writes its result to a new memory cell. The model supports the following constant-time Primitives:
\begin{itemize}\vspace{-0.3cm}
    \item Basic arithmetic operations (addition $+$, subtraction $-$, multiplication $\times$, and division $/$).
    \item Comparison operations (equality $=$ and order $<$).
    \item Indirect addressing, but only via memory cells that are designated to contain integers.
\end{itemize}
\noindent
These Primitives are subject to the following restrictions.
Real values cannot be written to integer memory cells.
Any arithmetic operation involving at least one real-valued cell must write its result to a real-valued cell; in particular, such results cannot be used for memory addressing.
Comparison operations output a Boolean, which is stored as an integer and may be used by Control-flow instructions for branching.
Integer division $x,y \mapsto \left\lfloor \frac{x}{y} \right\rfloor$ is \emph{not} a Primitive, even when both inputs are integers.
Finally, we allow randomisation, which can be modelled as a Control-flow branch that is taken with probability $\frac12$.

\mysubpara{Algebraic Decision Trees.}
To obtain Real RAM lower bounds, we typically apply a Ben-Or~\cite{Ben-Or83} argument: counting the leaves in a real-valued \emph{Algebraic Decision Tree} (ADT). 

For a \emph{real-valued} ADT with input-size $N$, the input is a vector $X = (x_1, \ldots, x_{N})$ of $N$ real numbers.
It is a rooted tree in which each internal node is labelled with a multivariate polynomial
$p(x_1, \ldots, x_{N}) \in \mathbb{R}[x_1, \ldots, x_{N}]$.
Each internal node has three outgoing edges labelled
$p(x) < 0$, $p(x) = 0$, and $p(x) > 0$.
Each leaf of the ADT contains a \emph{combinatorial output}, that is, a list of integers, where each integer refers to an index of an input variable.
For example, when sorting the $N$ numbers, each leaf contains a permutation of $[N]$, where a permutation such as $(1,3,7,\ldots)$ indicates that $x_1$ is the smallest input, followed by $x_3$, then $x_7$, and so on.
An ADT is \emph{complete} if every fixed input $X$ traverses a path to a leaf. 
It is \emph{correct} if it is complete and every input leads to the corresponding desired output. 

If the input to a Real RAM program is purely real-valued, then real-valued ADTs are strictly more powerful than the Real RAM.
This follows from a classical argument where any real-valued Real RAM program constructs an ADT whose depth is at most its worst-case running time~\cite{Ben-Or83}.
Therefore, any lower bound for real-valued ADTs also applies to the Real RAM when the input is purely real-valued.
Since each internal node has three outgoing edges, an ADT with $k$ leaves must have depth at least $\Omega(\log_3 k)$ on some input.
Thus, if over all real-valued inputs $X \in \mathbb{R}^{N}$ there are $k$ distinct combinatorial outputs, then both real-valued ADTs and the Real RAM admit a lower bound of $\Omega(\log_3 k)$. This also holds in expectation, since the median depth of the tree is $\Omega(\log_3 k)$.

\mysubpara{Integer inputs and ADTs.}
In many cases, however, the algorithmic input also contains integers (see, for example, the lower bounds by Seidel~\cite{SEIDEL1985319}).
If the input contains integers, the computational strength of a real-valued ADT and the Real RAM are incomparable: as the Real RAM can use the integer input for indirect addressing and outputting a solution.
We give a concrete example: let the real input $A \subset \mathbb{R}^n$ specify a real-valued point set $P$ and the integer input $\pi \subset \mathbb{N}^n$ specify the sorting of $P$ along a space-filling curve (for each index $i$,  the element $A[\pi(i)]$ is the $i$'th element along the sorting order). 
If the goal is to sort $P$ along this space-filling curve, then a Real RAM program can simply output $\pi$. 
Yet, any program that can only receive real-valued input is required to sort $A$.

\mysubpara{Clairvoyant ADTs.}
As a consequence, lower bounds for real-valued ADTs are not lower bounds for the Real RAM when presortings or other integer-valued promises are provided, and we have to adopt a different strategy.
Let the input be a real-valued vector $A$ and some integer vector $B$. For any fixed $B$, we define a \emph{clairvoyant} ADT $\mathcal{A}_B$ in a similar way to~\cite{Afshani2017Instance, Hoog2025Convex}: It is a real-valued ADT that knows $B$ as a constant and receives $A$ as input.
For any Real RAM program with input $A \cup B$, a clairvoyant ADT $\mathcal{A}_B$ is strictly more powerful. In particular,  consider any lower bound $L$ on the maximum depth traversed by a clairvoyant ADT $\mathcal{A}_B$ over all real-valued inputs $A$ such that $A \cup B$ is a valid input.
Then $L$ is also a worst-case lower bound for any Real RAM program.
Clairvoyant ADTs are thus a powerful tool for lower bounds. For upper bounds they are less applicable because they are computationally quite strong: any Real RAM or ADT cannot simulate a clairvoyant ADT $\mathcal{A}_B$ since it would have to identify the input $B$ and `find' the corresponding program or tree.

\mysubpara{Word RAM.}
Finally, we mention the Word RAM model as given by Fredman~\cite{Fredman1993Surpassing}.
The reason we mention this model is that we will be simulating its instructions on a Real RAM (under some assumptions about what those operations operate on). Let the input size be $N$. 
In the Word RAM model, we have a machine with an unbounded sequence of memory cells, each storing an integer of $O(\log N)$ bits.
A program is again a finite sequence of instructions, of \emph{Primitives}, \emph{Control-flow} and \emph{Halts}.  A \emph{Primitive} takes $k$ memory cells as input and writes its result to a memory cell.
The model supports the following constant-time Primitives:
\begin{itemize}[nolistsep]
    \item Integer arithmetic (addition $+$, subtraction $-$, multiplication $\times$, and integer division).
    \item Comparison operations (equality $=$ and order $<$).
    \item Direct and indirect memory addressing.
\end{itemize}
\noindent
Each Primitive operating on $k$ memory cells takes $k$ time. In our use of this model,
we will only store integers in $[2n]$, and $k$ will always be constant.
As we will briefly discuss in \cref{sec:auxiliary},
many other word-level operations such as finding the most significant bit can, after linear-time preprocessing, be simulated in constant time~\cite{Fredman1993Surpassing}.


\section{Quadtrees via Linear-depth Clairvoyant Algebraic Decision Trees}

Buchin and Mulzer~\cite{BuchinMulzer2011Delaunay} mention the existence of
an algebraic decision tree of linear depth to construct a quadtree of a presorted point set $(A_x, A_y, \pi)$,
although they do not provide details.
We note that via the linear-time equivalence between a Voronoi diagram and a quadtree, the result by Djidjev and Lingas~\cite{DJIDJEV1995VoronoiSorted} indirectly implies a linear-depth clairvoyant ADT. 
Via private communication, we confirmed that Buchin and Mulzer \cite{BuchinMulzer2011Delaunay} envisioned an alternative and direct construction of a linear-depth clairvoyant ADT.
Details of their construction will be helpful for our algorithm in Section~\ref{sec:sub-n-log-n-algo}. 
Hence, in this section, we present an explicit construction:

\begin{theorem}[{Buchin and Mulzer \cite{BuchinMulzer2011Delaunay}}]
\label{thm:buchin}
    Let $\pi \colon [n] \rightarrow [n]$ be a permutation. Denote by $\mathbb{I}_\pi$ all presorted point sets $(A_x \cup A_y) \times \pi$ where $\forall i \in [n]$, $A_x[i] = A_y[\pi(i)]$. 
    There exists a clairvoyant decision tree $\mathcal{T}_\pi$ of linear depth that for all $(A_x, A_y) \in \mathbb{I}_\pi$ finds the corresponding quadtree.
\end{theorem}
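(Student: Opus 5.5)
Since $\pi$ is fixed in $\mathcal{T}_\pi$, the combinatorial order type of the input is known in both coordinates. The depth is therefore spent only on comparisons that the permutation does not already decide: comparisons of coordinate differences against powers of two, and of points against the cell boundaries. I would build the tree as a clairvoyant simulation of a divide-and-conquer quadtree construction and charge $O(1)$ comparisons to each point or each compressed-quadtree node, of which there are $O(n)$.

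\textbf{Step 1 (bounding box).} The extreme points in $x$ and $y$ are $A_x[1],A_x[n],A_y[1],A_y[n]$, so the root square is determined. The size of that square is the smallest admissible power of two above the spread. Finding it is not a comparison on the permutation. Instead I treat the root cell as a fixed normalised square, or equivalently assume the standard convention that the root cell is given. The tree may branch on arbitrary polynomials, so I do not need to worry about floors.

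\textbf{Step 2 (recursive split in linear total depth).} Given a cell $C$ containing a point set $Q$ whose $x$-order and $y$-order are both known (the tree knows them from $\pi$), the tree splits $C$ into four children. To find where the vertical midline falls in the $x$-order, I would use exponential (galloping) search from both ends simultaneously. This costs $O(\log \min(k,|Q|-k))$ comparisons, where $k$ is the split index. The horizontal midline is handled the same way. The $x$- and $y$-orders of the children are then computed without any comparisons, since this is pure integer bookkeeping that a clairvoyant tree gets for free. Summing $\log\min(k,|Q|-k)$ over the recursion gives the usual $O(n)$ bound of the ``split smaller half'' recurrence $T(n)=T(k)+T(n-k)+O(\log\min(k,n-k))$.

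\textbf{Step 3 (compression).} Empty splits must not be charged. When all points of $Q$ fall into a single child, I instead jump directly to the smallest quadtree cell containing $Q$. Because the extreme points of $Q$ in both orders are known, this cell is determined by the $O(1)$ extreme coordinates: $\max(x_{\max}-x_{\min},y_{\max}-y_{\min})$ against the grid. Deciding it is a constant number of algebraic tests if I allow polynomial tests that encode ``these two values lie in the same aligned dyadic cell of level $\ell$''. For fixed $\ell$ each such test is a single comparison, but the level is unknown. This is the main obstacle. An ADT cannot compute a floor in $O(1)$ depth in general, and searching over levels costs unbounded depth. I would try first to charge this search to the spread between consecutive compressed nodes. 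If that fails, I would fall back on the Djidjev--Lingas route mentioned above: their linear-depth clairvoyant tree for the Voronoi diagram, composed with the linear-time Delaunay-to-quadtree conversion of Krznari\'{c} and Levcopoulos, whose steps are comparisons and hence ADT nodes. That combination yields $\mathcal{T}_\pi$ of linear depth directly.

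\textbf{Step 4 (output).} Each leaf records the compressed quadtree as an integer structure over indices, which is a valid combinatorial output. Correctness holds for all $(A_x,A_y)\in\mathbb{I}_\pi$ because every branch decision is an exact comparison.
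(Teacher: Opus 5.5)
Your Steps 2 and 4 are the paper's construction. You gallop from both ends of $A_x$ and then, inside each half, of $A_y$. You read off the children's presortings implicitly from the fixed $\pi$ at no cost. And you use the recurrence $T(n)\le\max_{1\le i<n}T(i)+T(n-i)+O(1+\log\min(i,n-i))=O(n)$.

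The genuine gap is Step~3, which you leave open. The missing idea is definitional. The theorem concerns the paper's \emph{recursively defined} compressed quadtree, in which a Type~i child is not an aligned dyadic cell but a minimum-area closed square $B'\subset B_i$ containing $P$. This $B'$ is a constant-size algebraic expression in $A_x[1],A_x[n],A_y[1],A_y[n]$, so compression costs $O(1)$ depth with no search over levels. The same removes your worry in Step~1: the root is simply the given square. Moreover, two compressions never occur in a row. $B'$ has points of $P$ on two opposite sides, so its midline parallel to those sides separates $P$, and the next split is of Type~ii. Each compression's $O(1)$ cost is charged to that split, and the recurrence is unaffected.

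Neither of your remedies works for the aligned notion you are targeting. Charging the level search to the spread cannot give a bound in $n$ alone, since the spread is unbounded.

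In fact, no finite-depth ADT computes the aligned compressed quadtree even for $n=3$ with $\pi$ fixed. Take the diagonal points $(b-\delta,b-\delta)$, $(b,b)$, $(b+\delta,b+\delta)$ in a fixed root square. Whether the first split groups the middle point with the left or the right one depends on the dyadic expansion of the coordinates. For fixed small $\delta$, this grouping alternates $\Omega(1/\delta)$ times as $b$ ranges over an interval. A fixed tree of polynomial sign tests, restricted to this family, produces a number of alternations in $b$ bounded independently of $\delta$.

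So the Djidjev--Lingas plus Krznari\'{c}--Levcopoulos route cannot output the aligned structure either. It can only give a tree for a non-aligned notion like the paper's, and the paper does remark that this indirect route exists for its notion. For that notion, your direct construction already goes through with the observation above. That explicit tree is also the version the paper needs, since Section~\ref{sec:sub-n-log-n-algo} simulates it.
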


\mysubpara{Open and closed squares.}
A quadtree is, on a high level, a partitioning of a bounding square $B$ into smaller squares such that each square contains at most one point of the input $P$. 
To formally define such a partition, we need to distinguish between open and closed squares.
A closed square is the area $[a, a+c] \times [b, b+c]$ for positive integers $a$, $b$ and $c$. 
An open square is the area $[a, a+c) \times [b, b+c)$ and we say that the squares $[a, a+c) \times [b, b+c]$ and $[a, a+c] \times [b, b+c)$ are \emph{half-open}.
In the definition of a (compressed) quadtree, we must carefully partition the bounding square $B$ into closed, open, and half-open squares such that each point of $P$ lies in exactly one square of the resulting partition.

\mysubpara{Quadtrees.}
Let $P$ be a planar point set contained in some (open, half-open, or closed) axis-aligned square $B$.
The \emph{split} operation $\texttt{split}(B)$ is the unique partitioning of $B$ into four equal-area closed, half-open and open squares such that each point $q \in B$ is contained in a unique part.
A \emph{quadtree} is a tree obtained by recursively applying the split operation on squares $C$ in $B$ where $P \cap C$ contains at least two points.
In the resulting tree $T$, we do not distinguish between nodes $v$ of $T$ and their corresponding square in the plane.  
The depth of the resulting tree is $\Theta(n \log \Psi)$, where $\Psi$ is the ratio between the distance realised by the farthest pair of points and the closest pair of points. Since $P$ may be an arbitrary size-$n$ planar point set, this quantity is not necessarily bounded.

\mysubpara{Compressed quadtrees.}
To avoid having unbounded-depth trees, we typically define and compute a \emph{compressed quadtree}~\cite{Loffler2012Triangulating}. Morally, a compressed quadtree considers maximal descending paths $v_0 \rightarrow v_1 \rightarrow \ldots \rightarrow v_\ell$
in the quadtree such that $v_1 \cap P = v_\ell \cap P$ and, if this path has length 2 or greater, makes $v_0$ the parent of $v_\ell$. 
Formally, we are unable to compute this quadtree cell $v_k$ on a Word RAM as it effectively corresponds to computing the maximum integer $k$ such that there exist integers $(i, j)$ such that (a subset of) $P$ is contained in the square $[i \cdot 2^{-k}, (i+1) \cdot 2^{k}) \times [j \cdot 2^{-k}, (j+1) \cdot 2^{-k})$. This in turn requires integer arithmetic operations that are not available on the Real RAM. 
Instead, compressed quadtrees are recursively defined:

\begin{definition}[Figure~\ref{fig:quadtree-split}]
Let $B$ be an (open, half-open, or closed) axis-aligned bounding box containing a planar point set $P$.
Let $(B_1, B_2, B_3, B_4) = \texttt{split}(B)$. 
The \emph{compressed} quadtree $T(B, P)$ has as its root $B$ which is split according to one of two types:
\begin{enumerate}[label=\roman*.]
    \item \label{type:one} If $P$ is entirely contained in one of the four squares $B_i$, then let $B' \subset B_i$ be a minimum-area closed square containing $P$. The root $B$ has a single child which is $T(B', P)$.
    \item \label{type:two} Otherwise, $B$ has four children which are $T(B_i, P \cap B_i)$ for $i \in [4]$.
\end{enumerate}
\end{definition}
\begin{figure}
    \centering
    \includegraphics[]{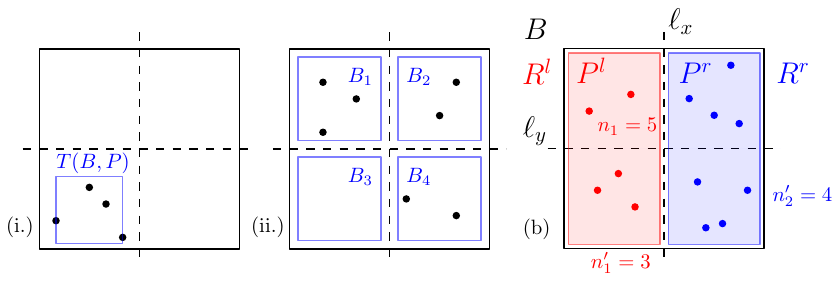}
    \caption{Depiction of type i.\ and type ii.\ quadtree splits. (b) Our variable names. \vspace{-0.3cm}}
    \label{fig:quadtree-split}
\end{figure}

\subsection{A Linear-depth Clairvoyant Algebraic Decision Tree}
\label{sub:adt}

Constructing a compressed quadtree reduces to sorting, and therefore there cannot exist a Real RAM algorithm whose (expected) running time is $o(n \log n)$ without additional information.
Yet (Theorem~\ref{thm:buchin}),  for a fixed permutation $\pi$ there exists a clairvoyant ADT $\mathcal{T}_\pi$ of linear depth such that for all $2$-presortings $(A_x \cup A_y) \times \pi $, the ADT finds the quadtree of $P$.
This is \emph{not} sufficient for an algorithm which constructs the quadtree for arbitrary pointsets with arbitrary presortings -- this is because $\mathcal{T}_\pi$ can be wildly different for different $\pi$, and it is a priori not clear how to efficiently construct $\mathcal{T}_\pi$ given $\pi$.
Nonetheless, the idea behind the linear-depth ADT $\mathcal{T}_\pi$ will be the first building block for our more powerful algorithm in the later \cref{sec:sub-n-log-n-algo}. We now explain in detail how to construct such an ADT. We note that the original idea originates from Buchin and Mulzer~\cite{BuchinMulzer2011Delaunay}. 
Our tree constructs the quadtree top down and recursively.
Specifically, we assume as recursive input a tuple $(P,n,B,A_x,A_y)$: 
\vspace{-0.2cm}

\begin{invariant}
    \label{inv:recursive}
    We maintain the invariant that our recursive input is a point set $P$ of size $n$, an  open, half-open or closed square $B$, and two input arrays $A_x$ and $A_y$ that contain $P$ sorted by $x$- and $y$-coordinate respectively. 
    The algorithm will perform either a Type~\ref{type:one} or Type~\ref{type:two} split on $B$, and recurse on each resulting square with two corresponding sorted arrays. 
\end{invariant}

\mysubpara{Splitting $B$.}
We describe a procedure that determines which of the Type~\ref{type:one}\ or Type~\ref{type:two}\ splits the square $B$ requires. If a Type~\ref{type:two} split is required, we create the recursive input according to Invariant~\ref{inv:recursive}. 
The main idea is to use a two-sided exponential search approach in order to make sure that this information is propagated using only sublinear depth in the ADT. Together with the recursive inequality described in \cref{sec:recursive_inequality}, this will yield the result. The procedure runs in two stages, first for splitting along $x$ and then along $y$. 

First, consider the following notation, depicted in Figure~\ref{fig:quadtree-split} (b). Let $\ell_x$  and $\ell_y$ be the vertical and horizontal lines splitting $B$ in half (geometrically).
We denote by $R^l$ and $R^r$ the two rectangles formed by partitioning $B$ along $\ell_x$.
We define $P^l =  P \cap R^l$ and $P^r = P \cap R^r$. 
Finally, we define the $2$-presortings  $(A_x^l, A_y^l, \pi^l)$ and $(A_x^r, A_y^r, \pi^r)$ of $P_l$ and $P_r$. 
Our first stage computes $R^l$, $R^r$ and these presortings the following way. 
Let $n_1 = |P^l|$ denote the number of points of $P$ that lie strictly left of $\ell_x$ (inside $R^l$).
We first consider a branching of our decision tree into three branches: where either $n_1 = 0$, $n_1 = n$, or $n_1 \in [1, n-1]$. 
Note that we can identify in which of these three cases we are  using a subtree of constant height by comparing $\ell_x$ to $A_x[1]$ and $A_x[n]$. 
If $n_1 = 0$, we conclude the first phase by reporting $R^l$ with $(A_x^l, A_y^l, \pi^l) = (\texttt{null}, \texttt{null}, \texttt{null})$ and  $R^r$ with $(A_x^r, A_y^r, \pi^r) = (A_x, A_y, \pi)$.
If $n_1 = n$, we set $(A_x^l, A_y^l, \pi^l) = (A_x, A_y, \pi)$ and  $R^r$ with $(A_x^r, A_y^r,\pi^r) = (\texttt{null}, \texttt{null}, \texttt{null})$.
In both of these cases, the branches then enter the second phase. 

In the third case, the branch where $n_1 \in [1, n-1]$, we place a subtree that simulates exponential search over $A_x$. This search iteratively compares for an index $j$ the point $A_x[j]$ to the line $\ell_x$.
This procedure finds the maximum index $i$ such that $A_x[i]$ lies left of $\ell_x$ using a path of length $O(\log (\min \{ i, n - i \}))$.
Consider the branch corresponding to some fixed $i$. In this branch, we can obtain an $x$-sorted array $A_x^l$ storing $P_l$ by splitting $A_x$ on $i$.
Obtaining $A_y^l$ is more work:  for each distinct value $i$, consider the permutation $\pi$ restricted to $[1, i]$. This gives a set of indices $I_i \subset [1, n]$ and the array $A_y^l$ equals all elements $A_y[j]$ for $j \in I_i$, where the $j$'s are ordered from high to low.
Thus, given $i$ and $\pi$, there exists some fixed injective map $\pi_i \colon [i] \rightarrow [n]$ such that $A_y^l$ is equal to the ordered sequence $A_y[\pi_i(j)]$ for $j \in [i]$. 
It follows that the branch $\mathcal{B}_i$ of the decision tree that corresponds to splitting $A_x$ on index $i$ can implicitly consider the map $\pi_i$ to obtain $A_y^l$ and thus a presorting $(A_x^l, A_y^l, \pi_i)$ of $P_l$. 
By \emph{implicitly}, we mean the following: Since $\pi$ is fixed, it means that once we are inside the branch $\mathcal{B}_i$, the permutation $\pi_i$ is unambiguously described, and contains the information about which points constitute the preordering $A^l_y[j]$ for $j \in [i]$. 
Hence in all future recursive calls, whenever the decision tree would need to reference an element $A^l_y[j]$ of the sorted order $A^l_y$, it can reference $A_y[\pi_i(j)]$ instead. Note that since we are only interested in the existence of a decision tree, and not how to efficiently construct it, this implicit operation is allowed.

Presorting $P_r$ is done analogously. It follows that in all three cases we  enter the second phase having computed $R^l$ and $R^r$ and the presortings $(A_x^l, A_y^l, \pi^l)$ and  $(A_x^r, A_y^r, \pi^r)$. The path to this branch had length $O(1 + \log (\min \{ i, n - i \})) = O(1 + \log (\min \{ n_1, n - n_1 \}))$.

\mysubpara{The second phase.}
Having halved $B$ by $\ell_x$,  our decision tree branches into the second phase. 
Let $\ell_y$ be the horizontal line that splits $B$ in half. 
The line $\ell_y$ splits $R^l$ into two squares $B_1$ and $B_2$ where we define $P_1 = P^l \cap B_1$ and $P_2 = P^l \cap B_2$ (see  Figure~\ref{fig:quadtree-split}~(b)). 
Let $n_1' = |P_1|$ denote the number of points that lie strictly below $\ell_y$ (these lie inside $P_1$).
We use the exact same technique as above to obtain a presorting $(A_x^1, A_y^1)$ of $P_1$  (and a presorting $(A_x^2, A_y^2)$ of $P_2$) after a path of depth $O(1 + \log (\min \{ n_1', |P^l| - n_1' \}))$.
We define $B_3$ and $B_4$ off of $R^r$, with $P_3 = P^r \cap B_3$ and $P_4 = P^r \cap B_4$ analogously and obtain a presorting of $P_3$ and $P_4$ using a path of depth  $O(1 + \log (\min \{ n_2', |P^r| - n_1' \}))$ where $n_2' := |P_3|$. 

Denote $(B_1, B_2, B_3, B_4) = \texttt{split}(B)$, define $P_i := P \cap B_i$.
Denote by $n_1$ the number of points in $B_1 \cup B_2$ and by $n_2 = n - n_1$. 
Denote by $n_1'$ the number of points in $B_1$ and by $n_2''$ the number of points in $B_3$.
The above procedure defines a subtree that computes for each $i$ the point set $P_i$ and a $2$-presorting of $P_i$ using a path of depth:
\begin{equation}
\label{eq:depth}
    O \left( 1 + \log  (\min \{ n_1, n_2 \}) + \log  (\min \{ n_1 - n_1', n_1' \}) + \log  (\min \{ n_2 - n_2', n_2' \}) \right)
\end{equation}

\paragraph*{Determining the split type and tree-depth analysis.}

Note that given $(B_1, B_2, B_3, B_4) = \texttt{split}(B)$, define $P_i := P \cap B_i$, we can determine the split type of the quadtree. 
Specifically, we perform a Type~\ref{type:one} split if and only if there exists a unique $i$ such that $P_i = P$. Our ADT defines a minimum-area closed bounding square $B' \subset B_i$ as a constant-complexity expression using $\{ A_x[1], A_x[n], A_y[1], A_y[n] \}$. We can then recurse on $(P, B', A_x, A_y)$.
Otherwise, we perform a split of Type~\ref{type:two}.
Per construction, our ADT has obtained a $2$-presorting of each point set $P_i = P \cap B_i$ and so this also maintains our input invariant. 
What remains is to analyse the total depth of this tree.

First, we note that splits of Type~\ref{type:one} increase the depth of our tree by at most a constant.
Specifically, Type~\ref{type:one} occurs if and only if $n_1, n_2 \in \{ n, 0 \}$ and $n_1', n_2'  \in \{ n, 0 \}$ and so identifying a Type~\ref{type:one} split takes constant depth. 
We cannot compute two Type~\ref{type:one} splits in a row as the new bounding box $B'$ has two points of $P$ on opposite facets. I.e., the horizontal or vertical middle of $B'$ must separate at least two points of $P$. 

To bound the tree depth, we switch our perspective slightly. 
Let us denote by $T_x(n)$ the remaining depth of the ADT when considering the point set $P$ of size $n$ (before performing the $x$-split). 
Let us further denote by $T_y^l(n_l)$ the remaining depth of the ADT when considering the point set $P^l$ of size $n_l$. Likewise denote by $T_y^r(n_r)$ the remaining depth of the ADT when considering the point set $P^r$ of size $n_r$. 
For some $n \in \N$, let $\hat{T_x}(n)$  denote the maximum of ${T_x}(n)$ over all point sets $P$ of size $n$ that can be encountered throughout the ADT. Likewise, let $\hat{T}_y^l(n)$ ($\hat{T}_y^r(n)$ respectively) denote the maximum of ${T}_y^l(n)$ over all point sets $P^l$ of size $n_l$ (the maximum of ${T}_y^r(n)$ over all point sets $P^r$ of size $n_r$ respectively).
We have 
\begin{align*}
\hat{T_x}(n) \leq \max_{1 \leq n_1 < n} \, \hat{T}_y^l(n_1) + \hat{T}_y^r(n -n_1) + O(1 +\log(\min(n_1, n-n_1))),
\end{align*}
we have similar recursions for $\hat{T}_y^l(n_1), \hat{T}_y^r(n)$.
We define $T(n) := \max \, \set{\hat{T}_x(n), \hat{T}_y^l(n), \hat{T}_y^r(n)}$
\begin{align*}
\text{ and obtain: } \quad T(n) \leq \max_{1 \leq i < n} {T}(i) + {T}(n - i) + O(1 +\log(\min(i, n-i))).
\end{align*}
This recursive inequality evaluates to $O(n)$. For completeness, we show this argument in Appendix~\ref{sec:recursive_inequality}.  Therefore the clairvoyant ADT $\mathcal{T}_\pi$ has linear depth, and we conclude Theorem~\ref{thm:buchin}.

\section{A randomised $O(n\sqrt{\log n})$ time algorithm for Quadtree construction}
\label{sec:sub-n-log-n-algo}

Djidjev and Lingas~\cite{DJIDJEV1995VoronoiSorted} showed that Voronoi diagrams (and thus quadtrees) are not \texttt{1-Presortable}. 
In this section, we show that quadtrees are \texttt{2-Presortable} as we present a randomised $O(n \sqrt{\log n})$-time Real RAM algorithm to construct a quadtree from a 2-presorted point set $(A_x, A_y, \pi)$.
One of our key insights is to use the efficient orthogonal range searching data structure by Belazzougui and Puglisi \cite[Theorem~4]{belazzougui2016range}.
This is a Word RAM algorithm that, after $O(n \sqrt{\log n})$-time preprocessing, can, for any rectangular range, locate the rightmost, leftmost, topmost or bottommost point in the query. We abuse the fact that the permutation $\pi$ specifies inputs in the range $[n]$ to simulate the required Word RAM instructions at no overhead. Specifically, we show in Section~\ref{sec:auxiliary} the following:

\begin{restatable}{theorem}{simulation}
    \label{thm:simulation}
    Let $\pi$ be a size-$n$ integer array that specifies a permutation from $[n]$ to $[n]$. 
    Let $I_\pi = \{  (i, \pi(i) ) : i \in [n]\}$ be the induced point set in $[1, n]^2$.
    On a Real RAM we can, with $O(n \sqrt{\log n})$ preprocessing, store $I_\pi$ in a linear-size data structure that can answer the following queries each in $O(log^{\varepsilon}(n))$ time (for any fixed $\varepsilon>0$):
    \begin{itemize}[nolistsep]
        \item \texttt{xNext(rectangle $R \subset [1, n]^2$)} reports the leftmost and rightmost point in $I_\pi \cap R$.
        \item \texttt{yNext(rectangle $R \subset [1, n]^2$)} reports the topmost and bottommost point in $I_\pi \cap R$.
    \end{itemize}
\end{restatable}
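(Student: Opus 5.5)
The plan is to treat Belazzougui and Puglisi~\cite[Theorem~4]{belazzougui2016range} as a black box on the Word RAM and simulate it on the Real RAM with constant overhead per instruction. Their structure stores $n$ points on the $[n]\times[n]$ grid in linear space, is built in $O(n\sqrt{\log n})$ time, and answers rectangular range successor/predecessor queries (leftmost, rightmost, topmost, bottommost point in a query rectangle) in $O(\log^{\varepsilon} n)$ time. Since $I_\pi$ is exactly such a rank-space point set, \texttt{xNext} and \texttt{yNext} follow directly, provided the structure's Word RAM program can be executed on the Real RAM. For \texttt{yNext}, build a second copy on the transposed set $\{(\pi(i),i)\}$, or use the symmetric query variants; either way this costs a constant factor.

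The simulation takes three steps. First, observe that every word the structure handles is an integer of $O(\log n)$ bits, say in $[0,n^c]$ for a constant $c$. The input is the integer array $\pi$, so all values live in integer cells, and the Real RAM's integer arithmetic ($+,-,\times$, comparisons, indirect addressing) applies to them directly. Second, supply the operations the Real RAM lacks: integer division, and the word-level operations Belazzougui--Puglisi rely on, such as shifts, bitwise AND/OR/XOR, most significant bit, and popcount. These all act on $O(\log n)$-bit words.

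The standard trick is to precompute lookup tables, as in \cite{Fredman1993Surpassing}. Choose a chunk size $b=\lceil \tfrac12\log n\rceil$. Tabulate every binary operation on pairs of $b$-bit chunks, which gives $2^{2b}=O(n)$ entries per operation, filled in $O(n)$ time by straightforward enumeration using only additions and comparisons. Also tabulate $\lfloor x/2^k\rfloor$ and $x \bmod 2^k$ for $x<2^{2b}$. A word of $O(\log n)$ bits then splits into $O(1)$ chunks, each obtained by table lookup applied to smaller pieces. Rather than storing words as single integers, I would represent each word as a constant-length tuple of chunks. Addition and comparison operate chunkwise with carries. Multiplication uses schoolbook multiplication on chunks, with a table for the chunk products. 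Bitwise operations and shifts by multiples of chunk widths are immediate, and arbitrary shifts combine tables with recombination. Most significant bit reduces to finding the highest nonzero chunk and one table lookup. Integer division of $O(\log n)$-bit words is the hardest operation. I would implement it by long division in base $2^b$ using $O(1)$ table lookups per digit, or, if the structure only divides by fixed parameters or powers of two, handle those cases directly via the tables. Third, conclude that each Word RAM instruction costs $O(1)$ Real RAM steps. The preprocessing is then $O(n)+O(n\sqrt{\log n})$, space stays linear, and queries take $O(\log^{\varepsilon} n)$. Query rectangles given with integer corners in $[1,n]^2$ are converted to the chunked representation in $O(1)$ time by table lookup.

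The main obstacle is verifying that Belazzougui--Puglisi's construction and query use only operations on $O(\log n)$-bit words with a fixed word size. This includes any sorting during preprocessing, which is integer sorting of rank-space values; I would replace it by radix sort with base $\sqrt n$. It also includes any hashing or multiplication tricks, which must not require larger words or nonuniform constants. I would audit their internal components (wavelet-tree-like layers and rank/select structures on bitvectors with popcount/select-in-word) and confirm that each primitive is covered by the tables above.
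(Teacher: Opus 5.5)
Your proposal follows the paper's proof: treat Belazzougui--Puglisi's Theorem~4 as a black box on $(\log n+O(1))$-bit words, and simulate each Word RAM instruction on the Real RAM in $O(1)$ time using $O(n)$-size precomputed lookup tables, with integer division done by chunked long division in the style of Knuth's Algorithm~D. Storing words as chunk tuples instead of single integers with a fixed-shift table is only a cosmetic difference. One parameter needs adjusting: each long-division step divides a two-digit partial remainder by a digit, which needs a table indexed by $3b$ bits. So either run the division in base $2^{b/2}$, where your chunk-pair division table suffices, or take $b\approx\frac13\log n$ (the paper's $k=\frac{\log N}{3}$) so that this table has $O(n)$ entries; base $2^{b}$ with $b=\frac12\log n$ does not work.
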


Assuming the existence of the auxiliary data structure, we compute from a $2$-presorted point set $P$ given as  $(A_x, A_y, \pi)$ its quadtree in  $O(n \sqrt{\log n})$ expected time.
We denote by $[1, n]^2 \cap \mathbb{N}^2$ the \emph{rank space}.
The point set $P$ has a corresponding point set $R_P$ in rank space where $(i, j) \in R_P$ if there exists a point $p \in P$ where $A_x[i] = p$ and $A_y[j] = p$. 
We denote by $\rho \colon \mathbb{R}^2 \rightarrow [1, n]^2 \cap \mathbb{N}^2$ the map that maps any $p \in P$ to its corresponding point in rank space.

\mysubpara{Morphing squares into rank-space rectangles.}
Let $B \subset \mathbb{R}^2$ be any axis-aligned square and let $(A_x,A_y,\pi)$ be a presorting of a planar point set $P$.
Let $p_\ell, p_r, p_b,$ and $p_t$ denote the leftmost, rightmost, bottommost, and topmost points of $B \cap P$, respectively.
We write $\texttt{xIndex}(p)$ for the index of a point $p$ in $A_x$, and $\texttt{yIndex}(p)$ for its index in $A_y$.
Observe that every point $p \in P \cap B$ satisfies
$\texttt{xIndex}(p) \in [\texttt{xIndex}(p_\ell), \texttt{xIndex}(p_r)]$
and
$\texttt{yIndex}(p) \in [\texttt{yIndex}(p_b), \texttt{yIndex}(p_t)]$.
Conversely, any point whose indices lie in these two intervals lies in $B$.
We therefore associate with $B$ the axis-aligned rectangle in rank space
$\Gamma(B) :=
[\texttt{xIndex}(p_\ell), \texttt{xIndex}(p_r)] \times
[\texttt{yIndex}(p_b), \texttt{yIndex}(p_t)]$,
which is the unique minimum-area rectangle with: 
\[
\hfill p \in P \cap B \hfill \text{if and only if} \hfill
(\texttt{xIndex}(p), \texttt{yIndex}(p)) \in \Gamma(B) \hfill
\]

\subparagraph{Preprocessing $\boldsymbol{P}$.}
Before computing the quadtree of $P$, we preprocess $P$ in the following manner.
We create subsets $P_0 \supset  P_1 \supset P_2 \ldots$ via a stochastic process where $P_0 = P$ and $P_i$ is constructed by sampling every point from $P_{i-1}$ at probability $\frac{1}{2}$. 
We store for each of the expected $O(\log n)$ point sets $P_i$, its corresponding subset $R_{P_i} \subset R_P$ in the auxiliary data structure $D(R_{P_i})$ from Theorem~\ref{thm:simulation}.
Since the sizes of $P_i$ follow, in expectation, a geometric sum this takes expected $O(n \sqrt{\log n})$ time.

\subparagraph{Our recursive invariant. }
After preprocessing, we construct the quadtree top-down; splitting a square $B$ and recursing on its children. 
The recursive invariant used in Section~\ref{sub:adt} is too expensive to maintain on a Real RAM. Rather, we maintain a lighter invariant where we assume that our recursive input is an open, half-open, or closed square $B$ and the leftmost, rightmost, bottommost, and topmost point of $P \cap B$.
Note that we can identify that $B$ is a leaf of the quadtree by testing if these four points are the same.
From here, we simulate the algorithm from Section~\ref{sub:adt} at $O(\log^\varepsilon n)$ overhead using the queries from Theorem~\ref{thm:simulation}. 

\begin{lemma}
    \label{lem:halfsplit}
    Let $B$ be a rectangle and denote by $\ell_x$ the vertical line splitting $B$ in half.
    Denote by $R^l$ and $R^r$ the two rectangles formed by partitioning $B$ along $\ell_x$ and define $P^l = P \cap R^l$ and $P^r = P \cap R^r$.
    Given $B$, the leftmost, rightmost, bottommost, and topmost point of $P \cap B$ and our auxiliary data structures, we can determine the leftmost, rightmost, bottommost, and topmost points of $P^l$ and $P^r$ in expected $O( (1 + \log  (\min \{ |P^l|, |P^r| \} ) ) \log^{\varepsilon} n )$ time.
\end{lemma}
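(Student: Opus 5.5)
We work in rank space throughout. By the discussion above, $P \cap B$ corresponds exactly to the points of $R_P$ inside $\Gamma(B) = [\texttt{xIndex}(p_\ell), \texttt{xIndex}(p_r)] \times [\texttt{yIndex}(p_b), \texttt{yIndex}(p_t)]$, and we know the four extreme points. The plan has three steps.

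\textbf{Step 1: locate the split index.} Find the largest index $i \in [\texttt{xIndex}(p_\ell), \texttt{xIndex}(p_r)]$ such that the point of $P\cap B$ with $x$-rank $i$ lies left of $\ell_x$. Then $\Gamma(R^l)$ is $\Gamma(B)$ restricted to $x$-ranks in $[\texttt{xIndex}(p_\ell), i]$, and $\Gamma(R^r)$ is the complement restricted to ranks $> i$.

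\textbf{Step 2: extreme points of the halves.} Once $i$ is known, the extreme points follow directly. The leftmost point of $P^l$ is $p_\ell$, and the rightmost point of $P^r$ is $p_r$. The rightmost point of $P^l$ and the leftmost point of $P^r$ are the points at $x$-ranks $i$ and $i'$, where $i'$ is the next rank. Each is obtained by one \texttt{xNext} query on the corresponding half of $\Gamma(B)$ in $D(R_{P_0})$. The topmost and bottommost points of each half come from one \texttt{yNext} query on $\Gamma(R^l)$ and one on $\Gamma(R^r)$. Altogether this is $O(1)$ queries, i.e.\ $O(\log^{\varepsilon} n)$ time. If one side is empty (detected by comparing $\ell_x$ with $p_\ell$ and $p_r$ in $O(1)$ time), we report \texttt{null} for it.

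\textbf{Step 3: find $i$ by two-sided exponential search.} This is the main obstacle. The ranks in $[\texttt{xIndex}(p_\ell), \texttt{xIndex}(p_r)]$ are not all occupied by points of $P\cap B$: points outside the $y$-range interleave with them. So we cannot simply probe $A_x[\texttt{xIndex}(p_\ell)+2^k]$, because the $2^k$-th point \emph{of $P\cap B$} from the left is not directly addressable. This is where the samples $P_0 \supset P_1 \supset \dots$ enter.

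To search from the left, we walk from the finest level upward. For $k=0,1,2,\dots$, we query $D(R_{P_k})$ with \texttt{xNext} on $\Gamma(B)$ intersected with the $x$-rank range $[\texttt{xIndex}(p_\ell), \cdot\,]$ to get the leftmost sampled point of level $k$ inside $B$. We continue with the one just past the previous probe, obtained via an \texttt{xNext} query starting right after the previous rank. In expectation this sampled point is about the $2^k$-th point of $P \cap B$ from the left. We compare its real $x$-coordinate to $\ell_x$. We interleave this with the symmetric search from the right, and stop at the first level $k^*$ where a probe crosses $\ell_x$. Since each level halves the sample, $2^{k^*} = O(\min\{|P^l|,|P^r|\})$ in expectation, and the number of probes is $O(1+\log\min\{|P^l|,|P^r|\})$.

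Then we descend back through levels $k^*-1,\dots,0$. At each level we maintain the bracket $[a,b]$ of $x$-ranks known to contain $i$, with $a$ left of $\ell_x$ and $b$ right of $\ell_x$. We enumerate the level-$(k-1)$ points of $P\cap B$ inside the bracket by repeated \texttt{xNext} queries, restricting the $y$-range to $\Gamma(B)$, and shrink the bracket using comparisons to $\ell_x$. Because each point survives to the next level with probability $\tfrac12$, the number of level-$(k-1)$ points of $B$ between consecutive level-$k$ points has $O(1)$ expectation. So each level costs $O(1)$ expected queries.

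Summing, we get $O(1+\log \min\{|P^l|,|P^r|\})$ expected queries, each costing $O(\log^\varepsilon n)$, which gives the claimed bound. The delicate part is the expectation argument. The sampling is independent of $B$ and of $\ell_x$, so the gaps within $P\cap B$ at each level are geometric. We need to check that conditioning on where the upward search stopped does not bias these gaps, and we would handle this with a standard skip-list style analysis.
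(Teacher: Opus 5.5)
Your proposal follows the paper's proof. The samples $P_0\supset P_1\supset\cdots$ with the structures $D(R_{P_i})$ serve as an implicit skip list over $P\cap B$ in $x$-order; an upward exponential phase followed by a Pugh-style descent finds the split rank with $O(1+\log\min\{|P^l|,|P^r|\})$ expected \texttt{xNext} queries, a symmetric search from $p_r$ gives the minimum, and $O(1)$ further \texttt{xNext}/\texttt{yNext} queries on the two halves of $\Gamma(B)$ give the extreme points. The one real difference is that you interleave only the upward phases and then descend on the side that stopped first, which adds conditioning on the winning side on top of the usual skip-list analysis, whereas running the two complete one-sided searches in parallel (as the paper's symmetric procedure from $p_r$ implicitly does) needs only the one-sided bound together with $\E[\min(X,Y)]\le\min(\E[X],\E[Y])$.
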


\begin{proof}
    Consider the sequence $\sigma$ of all points in $P \cap B$, sorted by $x$-coordinate.
    Then the leftmost and rightmost points, $p_l$ and $p_r$ in $P \cap B$, are at the start and end of $\sigma$. 
    Denote by $q$ the rightmost point in $P^l$ then $q$ is at index $b = |P^l|$ in $\sigma$. 
    We wish to find $q$ from $p_l$ in $O(\log b)$ time.   
    We achieve this by performing exponential search as in a skip list:

    Consider the point sets $P_0, P_1, \ldots $.
    The probability that a point $p \in P$ is in $P_i$ equals $2^{-i}$.
    Now consider all points in $\sigma[1, b]$. 
    We define for every such point $p$ its \emph{height} as the highest index $i$ such that $p \in P_i$.
    We denote by $p_{max} \in \sigma[1, b]$ the node from $P^l$ with the highest height $i$. In expectation, it holds that $i \in \Theta(\log b)$.
    This enables the following algorithm: 
    
   \mysubpara{Overall algorithm.}
    Given the  leftmost, rightmost, bottommost, and topmost point of $P \cap B$, we compute $\Gamma(B)$ in constant time. 
    We set $G = \Gamma(B)$ and alter it by moving its left boundary by one. 
    From $i = 0$, we query the level-$i$ data structure $D(R_{P_i})$ with $\texttt{xNext}(G)$ to find the leftmost point in $G$.     If the result lies left of $\ell_x$, we increment $i$, continuing until we find $p_{max}$.     We then shrink $G$ by setting its left boundary to $\texttt{xIndex}(p_{max}) + 1$.
    We set a counter $c$ to $i$ and decrement it once.  
    We finally apply the following skip-list algorithm (see Figure~\ref{fig:skiplist}):
    \begin{itemize}[noitemsep]
        \item While $\texttt{xNext}(G)$ on $D(R_{P_c})$ finds a point $p$ left of $\ell_x$, shrink $G$ by setting its left boundary to $\texttt{xIndex}(p) + 1$. We call this a \emph{right-step}.
        \item Otherwise, decrement $c$ by one. We call this a \emph{down-step}.
       \item If we cannot perform a right-step and $i = 0$, we have found $q$.
   \end{itemize}
   
    This procedure always finds $q$ since the left boundary of $G$ never corresponds to a point right of $\ell_x$,  and $q \in P_0$. 
    We now argue that the expected running time of this procedure is only $O(\log b)$ plus $O(\log b)$ queries to the data structure. We will use very similar arguments as in the original analysis of the skip list data structure introduced by Pugh \cite{pugh1990skip}.
    First, observe that the procedure first finds $p_{max}$. Then the counter $c$ (which can be interpreted as the current height of the search) only ever decreases until $q$ is found. Note that the height $i$ of $p_{max}$ in expectation is $O(\log b)$, but depending on the randomness used could be even larger. Based on this, we divide the procedure into three phases.
    \begin{enumerate}
        \item The phase until $p_{max}$ is found
        \item The phase during which the counter $c$ satisfies $c \geq \log b$
        \item The remaining phase where $c < \log b$ until $q$ is found.
    \end{enumerate}
    We limit the expected running time of each phase separately. Phase one has running time  $O(i)$ where $i$ is the height of $p_{\max}$, hence $O(\log b)$ in expectation.
    
    The running time of phase two is equal to the number $M_1$ of times $c$ was decremented during phase two, plus the number $M_2$ of right steps performed during phase two. We have $M_1 \leq i$, hence $\E[M_1] \leq \log b$. 
    Furthermore, $M_2$ is bounded by the number of elements in the array $\sigma[1,b]$ whose height is $\log b$ or higher. Hence we have $\E[M_2] \leq b2^{-\log b} = 1$.

    Finally, the running time of phase three is analysed using a classical skip-list argument: For each $k = \log b, \log b -1, \dots, 1$, 
    let $Y_k$ be the number of steps the algorithm performs during phase three from meeting the first element at height $k$ until meeting the first element at height $k-1$. Pugh showed that $\E[Y_k] = O(1)$ \cite{pugh1990skip}. 
    Indeed, we can imagine the search occurring backwards, and the randomness that generates the height of the elements being generated on-the-fly. 
    Under this interpretation, if the current element has height $k-1$, and we consider some element $x$ before the current element with height $k'$, we have a backwards pointer if and only if $k' \geq k-1$ and we go up a level if and only if $k' \geq k$. Then for each backwards pointer, the probability that we go up a level is at least $1/2$. This implies that $\E[Y_k] = O(1)$. In total, the expected runtime of phase three is $\sum_{k=2}^{\log b} \E[Y_k] = O(\log b)$.

    We can perform a symmetric procedure from $p_r$. Specifically, we can query  $\texttt{xNext}(G)$ to find the rightmost point in $G$ and altering the right boundary instead, to find $q$ using in expectation $O(1 + \log  (\min \{ |P^l|, |P^r| \} ) )$ queries.  
    Given $q$, we have found the leftmost and rightmost points of $R^l$.
    Doing a symmetric procedure gives the leftmost and rightmost points in $R^r$.
    Finally, given $q$, we partition $\Gamma(B)$ into two rectangles where the first rectangle $R_1$ is $\Gamma(B)$ up to $\texttt{xIndex}(q)$ and the second $R_2$ is $\Gamma(B)$ from $\texttt{xIndex}(q) + 1$ rightwards. 
    We query these two rectangles with the \texttt{yNext} queries on $A(P_0)$ to find the topmost and bottommost points in $R^l$ and $R^r$.
\end{proof}

\begin{theorem}
    \label{thm:quadtree}
    Let $P$ be a size-$n$ point set given as $(A_x, A_y, \pi)$ where $A_x$ and $A_y$ sort $P$ by $x$ and $y$ respectively, and $\pi$ is the permutation that maps $A_x$ to $A_y$. 
    We can construct a quadtree on $P$ in expected $O(n \sqrt{\log n})$ time.
\end{theorem}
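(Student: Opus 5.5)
The plan is to simulate the clairvoyant ADT of Section~\ref{sub:adt} top-down on the Real RAM, replacing each implicit manipulation of presortings by queries to the auxiliary structures. This costs an $O(\log^\varepsilon n)$ factor per ADT step, and the recursive inequality of Appendix~\ref{sec:recursive_inequality} then gives a total of $O(n\log^\varepsilon n)$ queries. The preprocessing dominates at $O(n\sqrt{\log n})$.

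\textbf{Preprocessing and the first node.} We build the sample sequence $P_0\supset P_1\supset\cdots$ and, via Theorem~\ref{thm:simulation}, the structures $D(R_{P_i})$. Since $\E[|P_i|]=n2^{-i}$, the expected cost is $O(n\sqrt{\log n})$. We then take the root square $B$ to be a closed bounding square computed from $A_x[1],A_x[n],A_y[1],A_y[n]$. Its four extreme points are exactly those four entries, so the recursive invariant (a square together with the leftmost, rightmost, bottommost and topmost points of $P\cap B$) holds at the root.

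\textbf{Processing a node.} For a node $B$ satisfying the invariant, we first test whether $B$ is a leaf by checking if the four extreme points coincide. Otherwise we apply Lemma~\ref{lem:halfsplit} along $\ell_x$ to obtain the extreme points of $P^l$ and $P^r$. We then apply the symmetric version of the lemma, using \texttt{yNext} and $\ell_y$, to each of $R^l$ and $R^r$; this yields the extreme points of $P\cap B_i$ for all four quadrants.

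The symmetric version needs $\Gamma(R^l)$, which is computable from the extreme points of $P^l$. Emptiness and ``everything on one side'' are detected in $O(1)$ time by comparing the extreme points with $\ell_x$ and $\ell_y$. In those cases the exponential search is skipped, so a node costs $O(\log^\varepsilon n)$ plus the search terms.

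For a Type~\ref{type:one} split, the minimum closed square $B'$ is computed in constant time from the four extreme points, and the invariant carries over unchanged because $P\cap B'=P\cap B$. As argued in Section~\ref{sub:adt}, two Type~\ref{type:one} splits cannot occur consecutively, so they add only a constant factor. For a Type~\ref{type:two} split we recurse on the four quadrants with the extreme points just computed.

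\textbf{Running-time analysis.} By Lemma~\ref{lem:halfsplit} and linearity of expectation, the expected time at a node is $O(\log^\varepsilon n)$ times the expression in \eqref{eq:depth}. Summing over the tree gives the same recurrence as before,
\[
T(n)\le \max_{1\le i<n} T(i)+T(n-i)+O\bigl(1+\log\min(i,n-i)\bigr),
\]
now scaled by $\log^\varepsilon n$. By Appendix~\ref{sec:recursive_inequality}, this solves to $O(n\log^\varepsilon n)$. Because each node's expected bound holds for the node's fixed geometry, and the tree shape is deterministic given $P$, summing expectations is legitimate even though the same samples are reused at every node. Choosing $\varepsilon<1/2$ yields $O(n\sqrt{\log n})$ expected time in total.

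\textbf{Main obstacle.} The delicate point is the expectation argument inside Lemma~\ref{lem:halfsplit} when one random sample is reused across all nodes. I would handle it by noting that the lemma's bound holds for every fixed query configuration. The quadtree structure, and hence the set of queries issued, does not depend on the randomness; only the search paths do. Linearity of expectation then applies directly, with no independence between nodes required.
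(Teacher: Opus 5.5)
Your proposal is correct and follows essentially the same route as the paper. Both keep the four extreme points of $P\cap B$ as the invariant, simulate the ADT of Section~\ref{sub:adt} by applying Lemma~\ref{lem:halfsplit} along $\ell_x$ and then along $\ell_y$ on $R^l$ and $R^r$, handle Type~\ref{type:one} splits in constant time, and conclude via the recursive inequality of Appendix~\ref{sec:recursive_inequality}, with the $O(\log^\varepsilon n)$ factor isolated and linearity of expectation applied. Your extra remarks make explicit what the paper leaves implicit: initialising the root from $A_x[1],A_x[n],A_y[1],A_y[n]$, and why reusing the same samples at every node is harmless (the set of split operations is determined by $P$ alone).
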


\begin{proof}
    Our preprocessing required expected  $O(n \sqrt{\log n})$  time.
    We then split a square $B$, and recurse on its children maintaining our recursive invariant that for the input $B$ we have the leftmost, rightmost, bottommost, and topmost point of $P \cap B$.
    The square $B$ is a leaf of the quadtree if and only if these points are all the same (or \emph{null} for an empty square). 
    Such a split is performed in the following manner.
    Let $(B_1, B_2, B_3, B_4) = \texttt{Split}(B)$ and denote by $\ell_x$ the vertical line splitting $B$ in two and by $\ell_y$ the horizontal line.
    We use Lemma~\ref{lem:halfsplit} to run the exact algorithm as in Section~\ref{sub:adt}:

    We split $B$ into $R^l$ and $R^r$ in expected time $O( (1 + \log  (\min \{ |P \cap R^l|, |P 
    \cap R^r| \} )) \log^\varepsilon n)$.
    We then split $R^l$ into $(B_1, B_2)$ and $R^r$ into $(B_3, B_4)$ in expected time  $O(\log^\varepsilon n  \cdot (1 + \log  (\min \{ |P \cap B_1|, |P 
    \cap B_2| \} ) + \log  (\min \{ |P \cap B_3|, |P 
    \cap B_4| \} )))$.   
    If there exists an index $i$ such that $B \cap P = B \cap P_i$ we can detect this case because for all other $B_j$, the topmost point is \emph{null}. We perform a Type~\ref{type:one} split using the leftmost, rightmost, bottommost, and topmost point of $P \cap B$ and we recurse on the new child $B'$. As in Section~\ref{sub:adt}, we are guaranteed that this recursive  call generates a Type~\ref{type:two} split next. 
    For Type~\ref{type:two} splits, we recurse on at most four subproblems $B_i$ of size $|P \cap B_i|$. We can isolate the $O(\log^\varepsilon n)$ term and apply the recursive inequality from Section~\ref{sub:adt}.
    By linearity of expectation, the resulting running time after preprocessing is then $O(n \log^\varepsilon n)$ in expectation which concludes the theorem. 
\end{proof}

  \begin{figure}[h]
        \centering
        \includegraphics[width=1\textwidth]{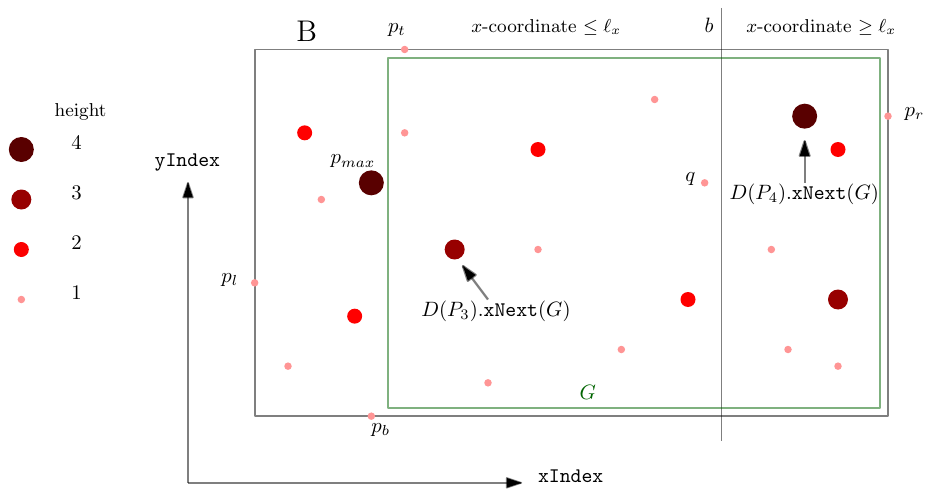}
        \caption{Visualization of our main algorithm. The auxiliary data structures $D(R_{P_i})$ for $i = 0,1,2,\dots$ are used to implicitly represent a skip list.}
        \label{fig:skiplist}
    \end{figure}

\begin{corollary}[Equivalence from~\cite{Loffler2012Triangulating}]
    \label{cor:proximity}
   Let $P$ be a size-$n$ point set given as $(A_x, A_y, \pi)$ where $A_x$ and $A_y$ sort $P$ by $x$ and $y$ respectively, and $\pi$ is the permutation that maps $A_x$ to $A_y$.      We can construct any proximity structure on $P$ in expected $O(n \sqrt{\log n})$ time.
\end{corollary}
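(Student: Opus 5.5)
The plan is to combine Theorem~\ref{thm:quadtree} with the known deterministic linear-time transformations between proximity structures. The first step is to run the algorithm of Theorem~\ref{thm:quadtree} on the presorted input $(A_x, A_y, \pi)$. This yields a compressed quadtree of $P$ in expected $O(n\sqrt{\log n})$ time. No further use of the presorting is needed after this point.

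Next, I apply the reductions. L\"{o}ffler and Mulzer~\cite{Loffler2012Triangulating} convert a compressed quadtree into a Delaunay triangulation, and into a WSPD, in deterministic $O(n)$ time on the Real RAM; they also obtain the nearest-neighbour graph along the way. From the Delaunay triangulation, planar duality gives the Voronoi diagram in linear time. The EMST is a subgraph of the Delaunay triangulation with $O(n)$ edges, so it can be computed in linear time via the linear-time MST algorithms for planar graphs (e.g.\ Cheriton--Tarjan contraction). Each of these steps costs $O(n)$ deterministically. Adding them to the expected $O(n\sqrt{\log n})$ bound, by linearity of expectation, gives total expected time $O(n\sqrt{\log n})$.

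The only point needing care is compatibility of definitions. The quadtree produced by Theorem~\ref{thm:quadtree} is the compressed quadtree with Type~\ref{type:one} shrinking to a minimum closed bounding square, rather than aligned to a global dyadic grid. So I must check that the L\"{o}ffler--Mulzer transformation accepts such a quadtree. I expect it does: their reductions rely only on the compressed tree having linear size, bounded-degree cell adjacency, and constant-factor shrinkage of cells. If needed, I would first convert our quadtree into their exact variant in linear time, by recomputing neighbour pointers with a top-down traversal. This step is the main, though minor, obstacle. Everything else is direct composition of black-box results.
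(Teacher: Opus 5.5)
Your proposal is correct and matches the paper's argument: the corollary is obtained by running Theorem~\ref{thm:quadtree} and then applying the deterministic linear-time equivalences of L\"{o}ffler and Mulzer, together with planar duality and the fact that the EMST is a subgraph of the Delaunay triangulation. The paper itself takes compatibility of its recursively defined compressed quadtree for granted by adopting L\"{o}ffler and Mulzer's definition; one caveat is that your fallback conversion must not involve aligning compressed cells to a global dyadic grid, since, as the paper notes, that requires floor-type operations unavailable on the Real RAM.
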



\section{Orthogonal Range Successor for the Real RAM}
\label{sec:auxiliary}

In Section~\ref{sec:sub-n-log-n-algo} we show a Real RAM algorithm that computes a quadtree (and from it, any proximity structure) in expected $O(n \sqrt{\log n})$ time.
A central ingredient of our algorithm is a data structure that solves the \emph{orthogonal range successor problem} over some integer-valued data in the range $[n]$. 
To this end, we wish to make use of an existing Word RAM data structure by Belazzougui and Puglisi \cite{belazzougui2016range}.
However, since we are on the Real RAM, we cannot immediately deploy such data structures. 
Instead we show that, by making use of a careful analysis of the bit complexity, the Word RAM operations used by this data structure can be simulated on Real RAM at no asymptotic overhead.

\simulation*

\begin{proof}
Belazzougui and Puglisi~\cite[Theorem 4]{belazzougui2016range}
show how to do exactly this in the Word RAM model using at most $\log n+O(1)$ bits per word.
Therefore, it would suffice to simulate this Word RAM data structure
on Real RAM with constant overhead per operation,
and $O(n)$ preprocessing time.
We will do exactly this.
Note that we only need to support these operations on integers consisting
of at most $\log n+O(1)$ bits.

We use the textbook Word RAM definition by Aho, Hopcroft, and Ullman~\cite{aho1974design}.
Of these operations, the one we cannot immediately support is integer division.
We will instead start by supporting a simpler operation:
Rightward bit shifts by a fixed amount $k$ (to be chosen later),
or, equivalently, integer division by a fixed power of two $2^k$.
We denote this operation as $i>>k:=\left\lfloor\frac{i}{2^k}\right\rfloor$.
To support this operation, we will employ a variant of a standard trick
used within Word RAM algorithms
to support additional types of operations:
An operation-specific lookup table.
Let $N$ be the maximum integer value that can be represented in $\log n+O(1)$ bits.
Our lookup table will have size exactly $N$, the $i$th entry of which
contains the result of $i>>k$.
We will generate this lookup table in two phases:
\begin{itemize}
    \item In Phase 1, we compute every multiple of $2^k$, and mark the answers in the table.
    \item In Phase 2, we sweep through the table, and fill the remaining entries accordingly.
\end{itemize}
This pre-computation takes $O(N)=O(n)$ time to compute.

Now, we can use the fixed right-shift operation as follows:
For any integer $i\leq N$ in a single register in Real RAM,
we can now split it into $\left\lceil\frac{\log n+O(1)}{k}\right\rceil$ contiguous registers
each containing a sequence of chunks with $k$ bits each,
in time proportional to this number of contiguous registers.
To do this with an integer $i$,
simply compute $i-(i>>k)*2^k$ to get the lowest chunk of $k$ bits,
and then take $i\leftarrow i>>k$ to delete these extracted bits.
These registers can be easily re-combined into one integer using multiplication and addition.

Now, to support integer division, we can employ a more standard variant of the same trick.
First, we will brute-force a look-up table for small values, including remainder results.
Second, we will essentially combine these small computations on chunked integers
using basic long division.
Pick $k=\frac{\log N}3$.
We can brute-force compute a table that allows us to do integer division
on integers of size $\leq k$.
We can afford to do each of these computations bit-by-bit
(so $O(k^2)$ time per operation),
since there is a small number of combinations ($O(2^{2k})$ in this case).
We can do the same to compute modulo on integers of size $\leq k$.
We can then combine the results using standard big-integer division
methods (e.g., Knuth's Algorithm D~\cite{knuth2014art}).
\end{proof}

\noindent
Given integer division, all other typical Word RAM operations with a constant number of operands
(such as bitwise \texttt{AND}, \texttt{OR}, \texttt{XOR}, and unary operations like \texttt{MSB})
can be simulated~\cite{Fredman1993Surpassing}.
The construction is quite similar to the above proof.
This is the standard approach used for extending the operation set in Word RAM.

\section{Expanding the Population of our Hierarchy}

In this section, we populate our hierarchy further by classifying several geometric problems.
First, we show that computing a triangulation is \texttt{1-Presortable}.
Next, we show that computing a KD-tree, the orthogonal segment intersection detection problem, and the maximum-area empty circle problem are \texttt{2-Presortable}. We do this by showing that they are not \texttt{1-Presortable} and providing randomised $o(n \log n)$-time algorithms that takes presortings as input.
In Section~\ref{sec:lowerbounds}, we show that several problems are \texttt{Presort-Hard}.

\subsection{Triangulations are \texttt{1-Presortable}}

We consider the problem of finding \emph{some} triangulation of a planar point set.
Although both the Triangulation and Delaunay Triangulation problems
have tight $\Theta(n\log n)$ algorithms in general,
it turns out that Triangulation is easier than Delaunay Triangulation
in the context of our presorting hierarchy. For triangulations we assume the input to lie in general position to avoid degeneracies:

\begin{theorem}
\label{thm:a-triangulation}
Let $P$ be a size-$n$ point set in general position where no two points share an $x$-coordinate.
Given an array $A_x$ that stores $P$ in sorted order, we can compute a triangulation of $P$ in linear time. 
\end{theorem}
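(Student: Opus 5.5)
We will build the triangulation incrementally by a left-to-right sweep over the points in the order given by $A_x$, maintaining the convex hull of the points processed so far exactly as in Andrew's monotone chain algorithm~\cite{andrew1979another}. Concretely, we process $p_1, p_2, \ldots, p_n$ and maintain, after step $i$, a triangulation $\mathcal{T}_i$ of $\{p_1,\ldots,p_i\}$ (stored as a doubly-connected edge list) together with its convex hull $H_i$. We store $H_i$ as two chains, an upper and a lower hull, each kept as a stack, with the most recently inserted point $p_i$ at the top of both. We start with $\mathcal{T}_2$ being the segment $p_1p_2$. Both hull stacks then contain $p_1,p_2$.

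To insert $p_{i+1}$, note that $p_{i+1}$ has strictly larger $x$-coordinate than all previous points, so $p_{i+1}$ lies outside $H_i$. Moreover, $p_i$ is the rightmost vertex of $H_i$ and is therefore visible from $p_{i+1}$. The hull vertices visible from $p_{i+1}$ form a contiguous chain through $p_i$. This chain consists of a suffix of the upper stack and a suffix of the lower stack.

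We process the upper stack as follows. While the top two entries $u_{k-1},u_k$ together with $p_{i+1}$ form a non-left turn (as in Andrew's test), we add the triangle $u_{k-1}u_kp_{i+1}$ to $\mathcal{T}$ and pop $u_k$. We treat the lower stack symmetrically. We also add the edge $p_ip_{i+1}$, and finally push $p_{i+1}$ onto both stacks. General position guarantees that no three points are collinear, so every orientation test is strict and every added triangle is non-degenerate.

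For correctness, we must argue three things. First, every triangle added lies outside $H_i$. Second, the added triangles have pairwise disjoint interiors. Third, the union of the added triangles is exactly $H_{i+1}\setminus H_i$. This is the standard fact that the region between the old visible chain and the new point is fanned from $p_{i+1}$ by the edges of that chain. A popped edge $u_{k-1}u_k$ is visible from $p_{i+1}$ precisely when the orientation test fails, which is the same condition Andrew's algorithm uses to decide that $u_k$ is no longer on the hull. Consequently, the fan triangles tile the region and the stacks correctly become the hull of $\{p_1,\ldots,p_{i+1}\}$. By induction, $\mathcal{T}_n$ is a triangulation of $\mathrm{conv}(P)$ with vertex set $P$.

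For the running time, each point is pushed once onto each stack and popped at most once. Each pop creates one triangle and costs $O(1)$ DCEL updates, so the total work is $O(n)$ on the Real RAM using only orientation predicates.

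The main obstacle is the correctness claim that the visible part of $H_i$ is exactly what the stack-popping removes. In particular, the upper and lower chains must be handled consistently at $p_i$. This matters when $p_i$ was the last point and, say, the lower chain only contains $p_i$ as its top. We expect this to follow directly from the correctness of Andrew's monotone chain algorithm, which we will cite rather than reprove, adding only the observation that the fan triangles are empty of other points of $P$ because they lie outside $H_i$ and to the left of $x(p_{i+1})$.
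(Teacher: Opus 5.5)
Your proof is correct, and it takes a different route from the paper. The paper gives a three-line reduction to two black boxes. It computes the hull with Andrew's algorithm and observes that the hull edges together with the $x$-monotone path $p_1p_2\cdots p_n$ form a connected plane graph with vertex set $P$. It then triangulates each bounded face with Chazelle's linear-time simple-polygon triangulation. You instead run a single left-to-right sweep that interleaves Andrew's stack maintenance with a fan triangulation of $\mathrm{conv}(H_i\cup\{p_{i+1}\})\setminus \mathrm{int}(H_i)$, charging each triangle to one stack pop.

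The paper's version is shorter, but it imports a notoriously intricate algorithm. That is needless: each of its bounded faces is bounded by one hull edge $p_ap_b$ and the subpath $p_a,\ldots,p_b$, so it is $x$-monotone, and elementary monotone-polygon triangulation would do. Yours is elementary and self-contained, using only orientation tests and a stack amortisation, at the price of proving the visibility claim yourself.

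That claim, which you flag as the main obstacle, closes directly without appealing to Andrew's correctness:
\begin{itemize}
\item Since $x(p_{i+1})$ exceeds both endpoints of an upper-hull edge $u_{k-1}u_k$, the edge is visible from $p_{i+1}$ iff $p_{i+1}$ lies strictly above its supporting line. This is exactly the orientation test.
\item The visible edges form a contiguous arc of $\partial H_i$ through the visible vertex $p_i$.
\item For $i\geq 3$ this arc cannot pass through $p_1$: to the right of $p_1$, the supporting line of the upper hull edge at $p_1$ lies above that of the lower one, so no point right of $p_1$ sees both.
\item Hence the arc is a suffix of each stack. For $i=2$, exactly one stack pops $p_2$.
\end{itemize}

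Two small remarks. First, your pop condition for the upper stack (``non-left turn'') is actually the lower-hull condition for a left-to-right sweep. Since the other stack is treated symmetrically, the two stacks merely swap names and nothing breaks. Second, to conclude that you have a triangulation rather than just a tiling, note that the new triangles meet $\mathcal{T}_i$ along whole hull edges of $H_i$. These are edges of $\mathcal{T}_i$, because by general position no point of $P$ lies in the relative interior of a hull edge.
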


\begin{proof}
Since the input is sorted by $x$-coordinate, 
we can compute the edges of the convex hull of $P$
in linear time~\cite{andrew1979another}.
Since no two points have the same $x$-coordinate,
$A_x$ also forms an $x$-monotone polygonal chain.
If we take the union of the edges in each of these structures,
we obtain a connected plane graph.
For each face of this plane graph,
we can then triangulate it with Chazelle's linear-time
triangulation algorithm for simple polygons~\cite{chazelle1991triangulating}.
See \cref{fig:a-triangulation} for an example of this construction.
\end{proof}

\begin{figure}
    \includegraphics[page=2,width=0.3\linewidth]{graphics/a-triangulation}
    \hfill
    \includegraphics[page=3,width=0.3\linewidth]{graphics/a-triangulation}
    \hfill
    \includegraphics[page=1,width=0.3\linewidth]{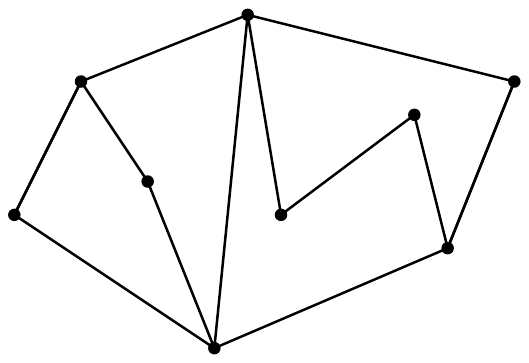}
    \caption{The construction for computing some triangulation, given $A_x$.
    We take the union of the convex hull (left) and an
    $x$-monotone polygonal chain through all the vertices (middle)
    to arrive at a connected plane graph whose vertices are exactly $P$ (right).}
    \label{fig:a-triangulation}
\end{figure}

\subsection{Constructing a KD-tree}

A KD-tree is a planar data structure that can be defined as follows: 
The root node represents a point set $P$. 
It splits $P$ by $x$-coordinate along the point with median $x$-coordinate.
Its child nodes split their remaining points by median $y$-coordinate instead and this recurses until the leaves store a single point.
We show that a KD-tree is in the class of \texttt{2-Presortable} problems, employing a similar technique to that in Section~\ref{sec:sub-n-log-n-algo}:


\begin{theorem}
\label{thm:kd-tree}
Let $P$ be a size-$n$ point set. Consider $(A_x, A_y, \pi)$  where $A_x$ and $A_y$ sort $P$ by $x$ and $y$ respectively, and $\pi$ is the permutation that maps $A_x$ to $A_y$.
Given only $A_x$, there is a comparison-based $\Omega(n \log n)$ lower bound for constructing the KD-tree of $P$.
Given $(A_x, A_y, \pi)$, we can construct a KD-tree in expected $O(n \sqrt{\log n})$ time. 
\end{theorem}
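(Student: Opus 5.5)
The proof has two parts: a lower bound given only $A_x$, and an upper bound using the range-successor structure of Theorem~\ref{thm:simulation}.

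\textbf{Lower bound.} We reduce from sorting $n/2$ reals $y_1,\dots,y_{n/2}$ that arrive in adversarial order. The idea is to place all the information in $y$-coordinates that the KD-tree must recover, while fixing the $x$-order. Construct $P$ with points $p_j=(j,y_j)$, so $A_x$ is known for free. Add a block of $n/2$ dummy points far to the right, say at $x$-coordinates $n+1,\dots,n+n/2$ with fixed $y$-coordinates. Then the median $x$-split separates exactly the $p_j$'s into the left child. From there on, the subtree over the $p_j$'s alternates $y$-median and $x$-median splits.

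The $y$-splits reveal order information about the $y_j$, but only at every other level, which is not enough by itself. A cleaner construction copies the values: set $p_j=(j,y_j)$ and $q_j=(-\infty',y_j)$, or equivalently use a counting argument. Over all inputs with fixed $x$-order, we lower-bound the number of distinct KD-trees by $(n/c)!^{\Omega(1)}$ as combinatorial outputs with fixed $A_x$. Then Ben-Or's bound for clairvoyant ADTs gives $\Omega(n\log n)$.

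Concretely, I would count as follows. The left subtree with $m$ points at a $y$-level splits into halves by $y$-rank, then each half by $x$-rank. The resulting leaf order determines the $y$-rank of each point up to a factor that is independent within each bottom pair. The number of realisable outputs is therefore at least $m!/2^{O(m)}$, which gives $\log N=\Omega(n\log n)$. This counting is the main obstacle. I would verify it by showing the map from $y$-rank permutations to KD-trees is at most $2^{O(m)}$-to-one: at each node the tree records which subset goes to each side of every split, and at the leaves only constant-size ambiguity remains.

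\textbf{Upper bound.} We mimic Lemma~\ref{lem:halfsplit}, but the KD-tree splits at rank medians rather than geometric midpoints. A node corresponds to a set $P\cap R$ with $R$ an axis-aligned rectangle, which in rank space is the rectangle $\Gamma(R)$. To split by $x$-median we need the point of median $x$-rank within $\Gamma(R)$, which is a range-selection query rather than a successor query. Instead, I would recurse as in Section~\ref{sub:adt} and charge exponential search to the smaller side.

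The recursion tracks $|P\cap R|=m$, which is known because each split halves it, so $m$ is computed arithmetically. The $x$-median is then the $\lceil m/2\rceil$-th point of $\Gamma(R)$ in $x$-order. Finding it requires walking from the leftmost point via \texttt{xNext}, which costs $O(m)$ queries. That is too slow at $O(m\log^\varepsilon n)$ per level, giving $O(n\log n\log^\varepsilon n)$ total.

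The fix is to use the skip-list samples $P_i$ as an implicit counting structure: descend levels, and at each level count how many sampled points are skipped. This only approximates ranks, however. Alternatively, we can use that Belazzougui--Puglisi-style wavelet structures also support range counting and range selection in $O(\log n/\log\log n)$ time.

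Since the balanced KD-tree has $O(n)$ nodes, even $O(\sqrt{\log n})$ per median query suffices. So I would invoke a range-selection structure with $O(n\sqrt{\log n})$ preprocessing and $O(\sqrt{\log n})$ query time on rank space, simulated on the Real RAM exactly as in Theorem~\ref{thm:simulation}. Each node issues one selection and $O(1)$ successor queries, giving $O(n\sqrt{\log n})$ total. I expect justifying such a selection structure with these bounds to be the delicate step, since Theorem~\ref{thm:simulation} does not provide it directly.
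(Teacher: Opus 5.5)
\textbf{Upper bound: this is the genuine gap.} Your argument depends on a rank-space range-selection structure with $O(n\sqrt{\log n})$ preprocessing and $O(\sqrt{\log n})$ query time. You do not construct it, and it cannot exist in that generality. Taking the $y$-median of a slab $[i,j]\times[1,n]$ is array range median on $\pi$, and J{\o}rgensen and Larsen prove an $\Omega(\log n/\log\log n)$ cell-probe lower bound for range median in $n\,\mathrm{polylog}\,n$ space.

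The paper never selects. It reruns the implicit skip-list exponential search of Lemma~\ref{lem:halfsplit}. At each visited point $p$, it replaces the comparison with $\ell_x$ by an orthogonal range-counting query (Chan--P\u{a}tra\c{s}cu, simulated on the Real RAM) on the part of $\Gamma(R)$ left of $p$. Since $m$ is known, this tells whether $p$ is before, at, or after the median. So the paper needs only counting, with $O(\log m)$ expected probes at a node with $m$ points, and $\sum_v\log m_v=O(n)$ over the tree.

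Be aware that counting hits the same barrier. P\u{a}tra\c{s}cu's lower bound applies, and Chan and P\u{a}tra\c{s}cu's online query time is $O(\log n/\log\log n)$, not the $O(\sqrt{\log n})$ quoted in the paper. So $\Theta(n)$ probes, like your one selection per node, give only $O(n\log n/\log\log n)$ --- still $o(n\log n)$, but not the claimed bound. To actually obtain $O(n\sqrt{\log n})$:
\begin{enumerate}
\item Spend such queries only at the $O(n/\log^2 n)$ nodes with more than $\log^2 n$ points. Even plain binary search with $O(\log n)$ counting queries per node then costs $o(n)$ in total.
\item For each maximal node with at most $\log^2 n$ points, list its points in $x$- and $y$-order by repeated \texttt{xNext}/\texttt{yNext} queries ($O(n\log^{\varepsilon} n)$ overall).
\item Build each such subtree from its two sorted lists in $O(m\log m)$ time ($O(n\log\log n)$ overall).
\end{enumerate}

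\textbf{Lower bound: right strategy, wrong count.} Fixing the $x$-order via $p_j=(j,y_j)$ and counting distinct KD-trees is sound. It is also more careful than the paper's one-line reduction, which treats the KD-tree of $\{(i,v_i)\}$ as a binary search tree on $y$ and so glosses over the $x$-levels. Information from every other level is in fact enough, so you can drop the dummy points and the copies $q_j=(-\infty',y_j)$.

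Your estimate, however, is false. An $x$-split records nothing about how the $y$-values of its two children interleave, so the ambiguity is not confined to the leaves. The map from $y$-orders to trees is $2^{\Theta(n\log n)}$-to-one, and there are far fewer than $m!/2^{O(m)}$ trees. The correct count is simple and still suffices. The $y$-splits are independent, and a $y$-node with $s$ points can realise each of its $\binom{s}{\lfloor s/2\rfloor}$ balanced partitions. Hence there are $\prod_{y\text{-nodes}}\binom{s}{\lfloor s/2\rfloor}=2^{\frac12 n\log_2 n-O(n)}$ distinct trees (roughly $\sqrt{n!}$), which forces depth $\Omega(n\log n)$.
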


\begin{proof}
First we show the lower bound: for a set of $1$-dimensional points along the $y$-axis,
a KD-tree is exactly a (balanced) binary search tree
over the points.
It follows that we can sort any set of points $P$ by their $y$-value using a linear-time in-order traversal of the KD-tree of $P$. This gives an immediate reduction from sorting: given any set of values $V = (v_1, \ldots, v_n)$ create the point set $P = \{ (i, v_i) \mid v_i \in V \}$.
Then $A_x = V$ is an array storing $V$ sorted by $x$-coordinate. Yet, any $o(n \log n)$-time algorithm to construct a KD-tree implies an $o(n \log n)$-time algorithm for sorting $V$ by the in-order traversal of the output. 

Given a 2-presorting $(A_x, A_y, \pi)$ we obtain an algorithm with a running time of expected $O(n \sqrt{\log n})$ by running the algorithm that is used in  Theorem~\ref{thm:quadtree}.
Recall that we denote by $[1, n]^2 \cap \mathbb{N}^2$ the \emph{rank space}.
The point set $P$ has a corresponding point set $R_P$ in rank space where $(i, j) \in R_P$ if there exists a point $p \in P$ where $A_x[i] = p$ and $A_y[j] = p$. 

In the proofs supporting Theorem~\ref{thm:quadtree},
we first preprocess $R_P$ by storing it in the orthogonal range successor data structure by Belazzougui and Puglisi~\cite{belazzougui2016range} (abusing that coordinates in $R_P$ are integers in $[n]$ to simulate this Word RAM algorithm on a Real RAM). 
Our algorithm then recursively halves a rectangle $R \subset \mathbb{R}^2$ by either a vertical line $\ell_x$ or horizontal line $\ell_y$. 
Say we split on $\ell_x$. 
We partition the points in $P \cap R$ along this line implicitly, using an implicit skip-list to find the rightmost point left of $\ell_x$ using exponential search.
Our implicit skip-list algorithm executes the skip-list algorithm at $O(\log^\varepsilon n)$ overhead by querying the data structure to get the successor in the skip list.
We then compare the successor to $\ell_x$ in constant time to test whether we need to go right or down in the skip list. 

To construct a KD-tree, we can use a very similar procedure.
Chan and Pătraşcu~\cite{Chan2010Counting} show, on a Word RAM, how to preprocess a point set in $O(n\sqrt{\log n})$ time to answer orthogonal range counting queries in $O(\sqrt{\log n})$ time.
In particular, their algorithm uses only $\log n$ bits (see the remark at the end of Section 1.3 in their paper).
Hence, via our exact same technique, we can store $R_P$ in this data structure also: simulating the Word RAM instructions on a Real RAM since $R_P$ contains only coordinates in $[n]$.
Whenever we want to split a rectangle $R$, we want to do so by the median $x$-coordinate (or $y$-coordinate).
Via an orthogonal range counting query on $\Gamma(R)$, we know how many points there are in $P \cap R$. 
We now use almost the exact same exponential search as before, encountering points $p \in R$.
However, instead of comparing $p$ to a line we instead consider splitting $\Gamma(R)$ on the $x$-value of $p$. 
This corresponds to a rectangle $R' \subset R$ in the plane, and to a rectangle $G = \Gamma(R')$ in rank space. 
By performing an orthogonal range counting query on $G$, we count how many points there are in $R' \cap P$. 
This way, we can test whether $p$ precedes, succeeds, or is the point with median $x$-coordinate and thus we can execute our exponential search.
Invoking this query takes $O(\sqrt{\log n})$ time per step in the skip-list (as opposed to the previous $O(\log^\varepsilon n)$ time). This does not asymptotically increase the overall running time. 
\end{proof}

\subsection{Orthogonal Segment Intersection Detection}

For a set of horizontal and vertical line segments,
we consider the problem of detecting an intersection between any pair of them.
In this problem, we do not have a set of points,
so it may not be immediately obvious what a presorting
is.
In this case, we consider a presorting of all the endpoints.
This intuitively matches up with the standard $\Theta(n\log n)$-time
algorithms for this problem and its generalizations~\cite{shamos1976geometric},
which all start by sorting the endpoints along the $x$-axis.
We will take a related approach to prove the following result:

\begin{theorem}
\label{thm:isect}
Let $S$ be a set of line segments and $P$ be a size-$n$ point set corresponding to all endpoints of the line segments.
Consider $(A_x, A_y, \pi)$  where $A_x$ and $A_y$ sort $P$ by $x$ and $y$ respectively, and $\pi$ is the permutation that maps $A_x$ to $A_y$.
Given only $A_x$, there is a comparison-based $\Omega(n \log n)$ lower bound for detecting if any two segments in $S$ intersect.
Given $(A_x, A_y, \pi)$, we can detect if $S$ contains any intersections in $O(n\log\log n)$ time.
\end{theorem}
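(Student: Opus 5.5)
The proof has two parts: a lower bound given only $A_x$, and an $O(n\log\log n)$ algorithm given $(A_x, A_y, \pi)$.

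\textbf{Lower bound.} I would reduce from a problem with a comparison-based $\Omega(n\log n)$ lower bound (in the algebraic decision tree sense of Ben-Or) that survives the presorting by $x$. The natural candidate is element uniqueness: given reals $v_1,\dots,v_m$, decide whether two are equal. For each $i$, create a short vertical segment from $(i, v_i)$ to $(i, v_i+\delta)$, and a horizontal segment spanning $[0, m+1]$ at height $v_i$.

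This naive version fails in two ways. First, the horizontals intersect all verticals. Second, the $x$-order of the endpoints should be fixed independently of the $v_i$, which it is here: the verticals have $x$-coordinate $i$, and the horizontals have endpoints at $x=0$ and $x=m+1$.

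So I would instead build only horizontal segments and use overlap as the intersection event. Let segment $i$ go from $(i,v_i)$ to $(i+1/2, v_i)$. These never overlap, so that does not work either. Instead, let every segment span $[0,1]$ with a tiny offset, from $(\epsilon_i, v_i)$ to $(1+\epsilon_i, v_i)$, where the $\epsilon_i$ are fixed and increasing. Then the $x$-order of all endpoints is fixed and known, and two collinear overlapping segments intersect if and only if $v_i=v_j$. Since $A_x$ is then a fixed array independent of the input, any algorithm for the problem decides element uniqueness, giving the $\Omega(n\log n)$ bound.

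If degenerate collinear overlaps are disallowed, I would use two families instead: horizontals at heights $v_i$ and verticals built the same way. Intersection detection between them can be arranged to encode set disjointness, which also has an $\Omega(n\log n)$ bound, while keeping all $x$-coordinates fixed.

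\textbf{Upper bound.} I would work entirely in rank space. Segment endpoints become integer points in $[1,n]^2$ via $A_x$ and $\pi$. Because both orders are consistent, a horizontal segment $[x_1,x_2]\times\{y\}$ and a vertical segment $\{x\}\times[y_1,y_2]$ intersect in the plane if and only if their rank-space images intersect, with care for equal coordinates (ties may occur and have to be handled).

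Detection then reduces to a sweep over $x$-ranks $1,\dots,n$, in which each index is processed in $O(1)$ amortized operations. At the left endpoint of a horizontal segment I insert its $y$-rank into a dynamic predecessor structure over the universe $[n]$. At its right endpoint I delete it. At a vertical segment I query for the successor of $y_1$ and check whether it is at most $y_2$.

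A van Emde Boas tree over $[n]$ gives $O(\log\log n)$ time per operation. Its integer operations (shifts, divisions by powers of two) can be simulated on the Real RAM via the lookup-table technique from the proof of Theorem~\ref{thm:simulation}, with $O(n)$ preprocessing and $O(1)$ overhead. The sweep order is simply the order of $A_x$, so no sorting is needed, and the total time is $O(n\log\log n)$. To keep space linear I would use a y-fast trie, or simply accept the $O(n)$-size vEB tree, which is fine since the universe size is $n$.

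\textbf{Main obstacle.} I expect the hardest part to be the lower bound: constructing an instance family in which $A_x$ is forced, yet the answer still encodes a problem that is hard for algebraic decision trees. The fixed-$\epsilon_i$ construction above is what I would try first. In the upper bound, ties between equal real coordinates need consistent tie-breaking in rank space so that touching segments are reported correctly. I would handle this by processing insertions before queries before deletions at equal $x$, and by comparing actual coordinates rather than ranks when checking $y$ ranges.
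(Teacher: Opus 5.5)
Your lower bound is essentially the paper's construction: horizontal segments from $(\epsilon_i,v_i)$ to $(1+\epsilon_i,v_i)$, reducing from element distinctness. Fixing the offsets by index rather than by value makes it explicit that the $x$-order does not depend on the input.

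\textbf{The gap is in the upper bound.} Your sweep only tests vertical segments against the currently active horizontal segments. It therefore never detects two collinear horizontal segments that overlap or touch, nor two collinear vertical ones. Those are exactly the intersections in your own lower-bound instances, which contain no vertical segments at all. On those instances your algorithm answers ``no intersection'', so as written it does not solve the problem whose hardness you just proved.

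\textbf{The fix is easy, and the paper does it in $O(n)$ time before the sweep.} Bucket the horizontal segments by $y$-coordinate, using runs of equal values in $A_y$. Fill each bucket in $x$-sorted order by scanning $A_x$. Then check consecutive segments in each bucket for overlap; with buckets sorted by left endpoint, checking consecutive pairs suffices. Do the same for vertical segments with the roles of $x$ and $y$ swapped. Only then run the horizontal--vertical sweep, which coincides with yours: insert, then query, then delete at equal $x$, using a van Emde Boas tree over $[n]$. Your remark on simulating its integer operations with the lookup-table technique fills in a detail the paper leaves implicit.

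\textbf{The same step repairs your tie handling.} If you use positions in $A_y$ as keys, two things go wrong:
\begin{itemize}
\item The two endpoints of a single horizontal segment get different ranks, so the segment has no single key.
\item An active horizontal segment at the height of a vertical segment's lower endpoint can sit before $y_1$ in $A_y$. The successor query then skips it, and comparing actual coordinates at the top end does not fix this.
\end{itemize}
Use dense ranks instead: one linear scan of $A_y$ (resp.\ $A_x$) that assigns equal ranks to equal coordinates. Each horizontal segment then has a single key. Once parallel overlaps have been excluded, any two simultaneously active horizontal segments lie at distinct heights, so the active keys are distinct. With that, the test ``the successor of $y_1$ (inclusive) is at most $y_2$'' is exact.
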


\begin{proof}
For the lower bound, we reduce from the element distinctness problem:
We are given a multiset $V$ of $n$ real values and need to output whether $V$ contains any duplicate elements.
This problem is known to have an $\Omega(n\log n)$ comparison-based lower bound by Fredman~\cite{fredman_how_1976}.

We can construct a set $S$ of line segments as follows:
For each element $v \in V$, create a horizontal line segment with endpoints
$(0+\varepsilon_v,v)$ and $(1+\varepsilon_v,v)$,
where $\varepsilon_v<1$ is some perturbation determined by $v$.
The result is a set of horizontal line segments, where the corresponding point set $P$ even all has unique $x$-coordinates.
Two values $v_1,v_2 \in V$ are identical
if and only if their corresponding pair of horizontal line segments intersect.

For the upper bound, we consider as input a 2-presorting $(A_x, A_y, \pi)$ of the endpoints $P$ of segments in $V$.
We assume that each value in $A_x$ (and $A_y$) specifies a segment endpoint and which segment of $S$ it belongs to. 
Observe that we can detect intersections
between pairs of horizontal segments and pairs of vertical segments
in $O(n)$ time:
Simply bucket the horizontal segments by their $y$-coordinate
in $x$-sorted order per-bucket,
and then sweep through each bucket
to detect intersections.
Do the symmetric routine for the vertical segments.

What remains is to detect intersections
between a pair of segments where one is horizontal and the other is vertical. 
We perform a line sweep from left-to-right,
with events corresponding to each endpoint.
At all times, we will maintain a data structure
consisting of all the current horizontal segments,
in order.
We store the $y$-coordinates of these horizontal segments.
In order to detect if a vertical segment
intersects any of the current horizontal segments,
it suffices to be able to perform predecessor and successor search
on integers of size at most $O(n)$.
This is exactly the task accomplished by
a van Emde Boas tree~\cite{van1975preserving} which uses $O(\log\log n)$ time per query
and update.
\end{proof}

\subsection{Maximum Empty Circle}

The final problem that we consider is the maximum empty circle problem.
Given a point set $P$, we wish to compute a maximum-area circle which has its centre in the convex hull of $P$ but contains no point of $P$~\cite{Toussaint1983}.
It is a classic geometric problem with an $O(n \log n)$ time upper bound and a matching comparison-based lower bound.

\begin{theorem}
\label{thm:maximum}
Let $P$ be a size-$n$ point set. Consider $(A_x, A_y, \pi)$  where $A_x$ and $A_y$ sort $P$ by $x$ and $y$ respectively, and $\pi$ is the permutation that maps $A_x$ to $A_y$.
Given only $A_x$, there is a comparison-based $\Omega(n \log n)$ lower bound for computing a maximum empty circle of $P$. 
Given $(A_x, A_y, \pi)$, we can compute a maximum empty circle in expected $O(n \sqrt{\log n})$ time. 
\end{theorem}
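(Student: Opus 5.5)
The theorem has two parts, and I will handle them separately: an upper bound obtained by reduction to Delaunay triangulations, and a lower bound obtained by reduction from a problem already known to be hard under $x$-presorting.

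\textbf{Upper bound.} Given $(A_x, A_y, \pi)$, I first invoke Corollary~\ref{cor:proximity} to build the Delaunay triangulation and the Voronoi diagram of $P$ in expected $O(n\sqrt{\log n})$ time. The convex hull of $P$ then comes for free in $O(n)$ time, either via Andrew's algorithm on $A_x$ or as the outer boundary of the Delaunay triangulation. By the classical characterisation of Toussaint~\cite{Toussaint1983}, the centre of a maximum empty circle whose centre is constrained to $\mathrm{conv}(P)$ is one of three kinds of point: a Voronoi vertex inside the hull, an intersection of a Voronoi edge with a hull edge, or a hull vertex (the last only in degenerate cases). There are $O(n)$ Voronoi vertices, each giving a candidate circle of constant complexity to evaluate. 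To find the Voronoi-edge/hull-edge intersections in linear time, I walk along the hull boundary through the Voronoi diagram; each unbounded Voronoi region meets the hull in a contiguous piece, so the total work is $O(n)$. Taking the best candidate gives expected $O(n\sqrt{\log n})$ time overall. The one point needing care is justifying the linear-time hull/Voronoi overlay. If the walk is delicate, I would fall back on the observation that each hull edge $pq$ is crossed only by Voronoi edges of cells adjacent along a path in the Delaunay triangulation, and charge each crossing to a Delaunay edge.

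\textbf{Lower bound.} Here I need a problem that is hard even when $A_x$ is given, and I plan to reduce from element distinctness, or more precisely from maximum gap, which has an $\Omega(n\log n)$ lower bound in the algebraic decision tree model. Given reals $v_1,\dots,v_n$, I embed them so that the $x$-order carries no information but the answer encodes the gaps of the $v$'s. Concretely, I would place the points $(i\varepsilon, v_i)$ for a tiny $\varepsilon>0$. These points lie almost on a vertical line, with the $x$-order given by the indices, so $A_x$ is trivial to produce. To make the convex hull have positive width, I add two far points $(\pm L, c)$ for a suitable $c$ and large $L$; these must be placed after the $x$-sort and placed correctly at the ends of $A_x$.

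With this configuration, the largest empty circle centred in the hull sits between the two consecutive $v$ values with the largest gap, and its radius is about half that gap, up to corrections in $\varepsilon$ and $L$. Reading off the answer therefore solves maximum gap in the algebraic decision tree model, and since every such tree is realised by a clairvoyant tree for the fixed trivial $\pi$, the bound transfers to the presorted setting.

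The main obstacle is making this geometry exact. Large circles centred near the far points $(\pm L, c)$ must not beat the gap circle, so I would need to choose the added points carefully, for instance as a thin wedge instead of two isolated far points. Alternatively, I could reduce from element distinctness by deciding whether the maximum empty radius is below a computed threshold. The cleanest route is probably to reduce from the known $\Omega(n\log n)$ bound for maximum gap with the points placed on a line plus a small perturbation, then check that the comparison-based argument survives.
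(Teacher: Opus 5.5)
The upper bound is correct and matches the paper; the lower bound has a genuine gap, and a false claim in your upper-bound sketch should be replaced by a citation.

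\textbf{Upper bound.} Your argument is the paper's: Corollary~\ref{cor:proximity} gives the Voronoi diagram in expected $O(n\sqrt{\log n})$ time, and Toussaint's algorithm then finds the maximum empty circle in linear time. Cite that linear bound directly rather than re-deriving the hull/Voronoi overlay, because your justification is false. Take $P=\{a,b,c\}$ with $a=(0,0.1)$, $b=(-1,0)$, $c=(1,0)$. The unbounded cell of $a$ meets $\partial\,\mathrm{conv}(P)$ in two separate pieces: one around $a$, and one in the middle of edge $bc$. Your sketch would also still need linear-time in-hull tests for the Voronoi vertices.

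\textbf{Lower bound.} Reducing from maximum gap is the right source problem, and the paper uses it too. Your construction, however, defeats itself. The segment between $(-L,c)$ and $(L,c)$ lies in the hull, so the centre $(L/2,c)$ is admissible for every choice of $c$. Its empty radius is at least $L/2-n\varepsilon$. Once $L$ exceeds the spread of the values, the optimum is dictated by $L$ and says nothing about the gaps.

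Your reason for adding these points is also mistaken. The column $(i\varepsilon,v_i)$ already has a two-dimensional hull. Its thinness is exactly what the reduction needs, since it forces every admissible centre to lie next to the $y$-axis.

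The missing idea is to add nothing. The paper places, for each value $v$ (values pairwise at distance at least $1$), the two points $(-\varepsilon_v,v)$ and $(+\varepsilon_v,v)$, with offsets below $0.001$ chosen by index so that $A_x$ is the same for all inputs. Then:
\begin{itemize}
\item the hull is a strip of width below $0.002$ containing $\{0\}\times[\min V,\max V]$;
\item the empty radius at a centre of height $y$ is governed by the nearest values above and below $y$, so the largest circle sits in the largest gap;
\item its four nearest input points, found in $O(n)$ time, reveal that gap.
\end{itemize}

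A common offset $\varepsilon$ (ties in $A_x$ broken by index) makes this exact. The hull is then the rectangle $[-\varepsilon,\varepsilon]\times[\min V,\max V]$. The empty radius at $(x_0,y_0)$ is $\sqrt{(\varepsilon-|x_0|)^2+\delta(y_0)^2}$, where $\delta(y_0)$ is the vertical distance to the nearest value. The best radius inside a gap $g$ is $\sqrt{\varepsilon^2+g^2/4}$, strictly increasing in $g$. By contrast, your single column alone only locates the largest gap up to an additive $O((n\varepsilon)^2)$.

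Your element-distinctness fallback also does not work as stated. The maximum empty radius is controlled by the largest gap, not the smallest, so a threshold test on it decides uniform gap, not distinctness.
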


\begin{proof}
    For the lower bound, we reduce from the \emph{largest gap} problem.
    In this problem, the input is a set of $n$ values $V$.
    Consider the set $S = (s_1, \ldots, s_n)$ which sorts $V$.
    The maximum gap is the index $i$ that maximizes $s_{i+1} - s_i$.
    This problem was shown to have a comparison-based $\Omega(n \log n)$ lower bound by Lee and Wu~\cite[Lemma A.2]{Lee1986Gap}. Moreover, as the proof is combinatorial it still holds even if we assume some minimal separation between the values in $V$. So let $V$ be a set of values where all values differ by at least 1. Denote by $S$ its unknown sorted set. 
    Given $V$, we construct in linear time a point set $P$ by constructing for each point $v_i \in V$ two points: $(-\varepsilon_v, v)$ and $(+\varepsilon_v, v)$ where $\varepsilon_v < 0.001$ is some arbitrary value.
    The result is a column of 4-gons whose width is less than $0.001$ and whose height corresponds to the gaps in $S$.
    Since the $x$-coordinates are arbitrary, an array $A_x$ that sorts $P$ by $x$-coordinates gives zero additional information. 
    The convex hull of these points is a rectangle of width $0.001$ and the largest disk that has its centre in this rectangle (but contains no point of $P$) has a diameter which corresponds to the largest gap in $S$. 
    Given a maximum empty circle, we can find its closest four points in $P$ in  linear time and thus the values in $V$ that realise the maximum gap in $S$. 

    For the upper bound, Toussaint~\cite{Toussaint1983} showed an algorithm that given the Voronoi diagram of a point set computes the maximum empty circle in linear time.
    By Corollary~\ref{cor:proximity}, given $(A_x, A_y, \pi)$, we can compute the convex hull of $P$ in expected $O(n \sqrt{\log n})$ time, which concludes the proof.   
\end{proof}

\section{Showing Presort-Hardness}
\label{sec:lowerbounds}

In this section, we show that computing an onion layer decomposition is \texttt{Presort-Hard}.
We furthermore show that constructing the ordered $k$-closest pair is \texttt{Presort-Hard}.
These results have several consequences.
Firstly, our lower bound for onion layer decompositions implies a lower bound for decremental convex hulls.
Our lower bound for the $k$-closest pair problem implies a lower bound for decremental closest pair. 

\subsection{Onion Layers}

\begin{figure}[h]
\begin{subfigure}{0.45\linewidth}
    \includegraphics[width=1\linewidth,page=1]{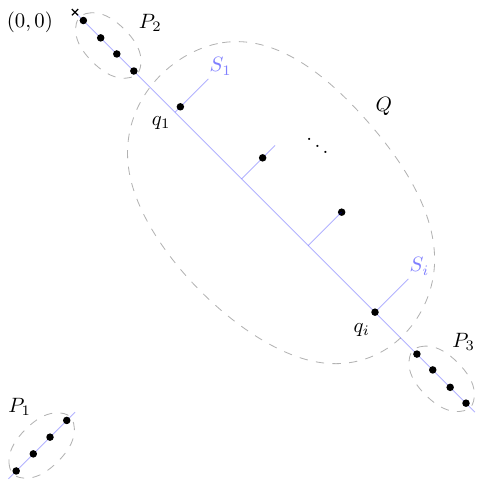}
    \caption{Point placement}
    \label{subfig:onion-layers-1}
\end{subfigure}
    \hfill
\begin{subfigure}{0.45\linewidth}
    \includegraphics[width=1\linewidth,page=2]{graphics/onion-layers-lower-bound.pdf}
    \hfill
    \caption{Onion layers}
    \label{subfig:onion-layers-2}
\end{subfigure}
    \caption{Example of a family of point sets that admits $\Omega((n/4)!)$ different onion layer decompositions, but has a fixed $x$- and $y$-sorted order.}
    \label{fig:onion-layers}
\end{figure}

In this section, we show that computing the onion layer decomposition is \texttt{Presort-Hard}. 
Assume we have a set of $n$ points $P \subseteq \R^2$, given to us as input in form of a list $L_P = (p_1,\dots,p_n)$. 
Let us call the \emph{combinatorial onion layer decomposition} of $L_P$ a list of sets $(S_1,\dots,S_k)$, such that the sets are disjoint, 
their union equals $P$, and each set $S_i$ contains exactly the vertices of the convex hull of $P \setminus (S_1 \cup \dots \cup S_{i-1})$ for $i \in [k]$. 
It is clear that each list $L_P$ has a unique combinatorial onion layer decomposition.

\begin{theorem}
\label{thm:onion_layer}
Computing the combinatorial onion layer decomposition is \texttt{Presort-Hard}.
\end{theorem}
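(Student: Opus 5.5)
We will prove this with a clairvoyant ADT counting argument, following the approach described in the preliminaries. We fix a permutation $\pi$ (and thereby the $x$- and $y$-orders) and exhibit a family of real inputs consistent with that presorting which has $\Omega((n/4)!)$ distinct combinatorial onion layer decompositions, as suggested by Figure~\ref{fig:onion-layers}. Any clairvoyant ADT $\mathcal{A}_\pi$ receiving these inputs then has at least that many reachable leaves. Its depth is therefore $\Omega(\log((n/4)!)) = \Omega(n\log n)$, and this lower bound transfers to any Real RAM program given $(A_x,A_y,\pi)$. The argument works for any constant number of fixed sorting directions, because we only need to freeze a constant number of orders.

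\textbf{Construction.} Split $P$ into four groups of $m=n/4$ points, one near each side of a large square. We describe the bottom group; the others are rotated copies. The bottom group consists of points $q_1,\dots,q_m$ with $x$-coordinates fixed at $1,2,\dots,m$, so the $x$-order is fixed. Their $y$-coordinates are $-L+\delta_j$, where $L$ is huge, and we require $\delta_1<\delta_2<\dots<\delta_m$ strictly increasing and tiny. This fixes the $y$-order as well, and the other three groups are built analogously so that the global $x$- and $y$-orders are independent of the chosen parameters. The freedom lies in the exact magnitudes of the perturbations, which decide which point becomes exposed at each onion peel.

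\textbf{Key steps.} (1) We arrange that each onion layer contains exactly one point from each group: one extreme point per side. This makes the decomposition a sequence of $m$ quadruples, so the output is determined by the order in which each group's points are peeled. (2) Within the bottom group, the point removed at layer $k$ is the vertex of the current hull bottom chain. We choose a perturbation that is not monotone in $x$ but is dominated by a term such as $\epsilon\cdot(\text{rank})$ plus a tiny curvature. Then the lowest point relative to a nearly horizontal supporting direction is controlled by an arbitrary prescribed permutation $\sigma$, while the $y$-order stays fixed by a dominant monotone tilt. Concretely, we set $y(q_j) = -L + \eta j + \epsilon\, r_\sigma(j)$, and we tilt the other groups so that the supporting line of the bottom side has slope exactly cancelling $\eta$. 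Peeling then removes points in the order of $r_\sigma$. (3) We verify that for every $\sigma$ the four groups interact only through the corner regions, so that each hull has one point per side. This is ensured by placing the groups on sides of length $\gg$ the perturbations and far apart. The result is $m!$ distinct decompositions, all sharing the same $(A_x,A_y,\pi)$.

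\textbf{Main obstacle.} Step (2) is the main difficulty: we must decouple the peeling order from the sorted orders. A strictly convex-position or monotone arrangement would make the peel order determined by the $x$-order. The tilt trick is what resolves this. The slope of the hull edges of the neighbouring groups determines which bottom point is extreme, and the $y$-order is frozen by the dominant term $\eta j$. We must check that $\eta$ is cancelled consistently at every layer, which requires that the adjacent groups' peeled points keep the supporting edge direction within $o(\epsilon/m)$ of the tilt. If this fails, the fallback is to use only $\sqrt{n}$-sized subconfigurations, which still gives $\log((\sqrt n)!)^{\sqrt n} = \Omega(n\log n)$.
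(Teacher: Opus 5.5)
Your counting framework is the same as the paper's: fix $\pi$, exhibit $\Omega((n/4)!)$ consistent inputs with distinct decompositions, and conclude that the clairvoyant tree $\mathcal{A}_\pi$ has depth $\Omega(\log((n/4)!))$. The gap is in the construction, exactly at the step you call the main obstacle, and your step (3) pushes in the wrong direction.

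After shearing away $\eta$, the bottom group is $\{(j,\epsilon r_\sigma(j))\}$, and any two of its points span a segment of slope at least $\epsilon/(m-1)$ in absolute value. So the minimum-$r$ point can be the group's only vertex on a layer, for every $\sigma$, only if the two layer edges at that point have sheared slopes of absolute value $O(\epsilon/m)$. In other words, at every layer the neighbouring hull vertices from the other groups must lie essentially on the tilt line through the bottom group.

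Your layout does the opposite. With the groups near the four sides of a large square, far apart and meeting only at the corners, the two edges at the bottom vertex differ in direction by about $\pi/2$. Since $\epsilon m$ is tiny, the edge from the left group's vertex descends at roughly $45^\circ$ and first meets the bottom group at its leftmost point. Hence $q_1$ lies on the first layer for every $\sigma$, alongside the minimum-$r$ point, so step (1) fails and the peel order is not $r_\sigma$.

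The same requirement rules out making all four groups rotated copies. The exterior angles of a layer polygon sum to $2\pi$, so they cannot all be $O(\epsilon/m)$. The $\sqrt{n}$ fallback does not rescue this either: onion layers are global, so a union of subconfigurations does not peel independently, and the same problem recurs inside each one.

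What is missing is a layout in which the permutation group's neighbours are collinear with it at every layer; this is the real content of the paper's proof. There, only $Q$ encodes the permutation. $P_2$ and $P_3$ lie on the common line $y=-x$ and flank $Q$, so at each layer $Q$'s hull neighbours are on that line. $Q$ lies strictly on one side of the line and $P_1$ far on the other. The three fixed groups lie on segments oriented so that exactly one point of each is a vertex of every layer. An arbitrary order is then realised by nested placement, putting $q_{j_i}$ in the relative interior of $A_{i-1}\cap S_{j_i}$.

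Your explicit formula would also work inside such a layout. For example, put the flanking groups on the line at sheared height $\epsilon(m+1)$, at horizontal distance more than $m^2$ from the bottom group. Then every edge from the current minimum-$r$ point to a neighbour has slope magnitude below $\epsilon/m$, and peeling follows $r_\sigma$. That layout has to be designed and proved, though, not left as a check to be done afterwards.
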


\begin{proof}
    Let $n \in \N$ be divisible by four w.l.o.g. We will showcase a family $\mathcal{F}_n \subseteq (\R^2)^{n}$, such that each member $L_P \in \mathcal{F}_n$ is a list of $n$ points in $\R^2$, such that across $\mathcal{F}_n$ at least $\Omega((n/4)!)$ different combinatorial onion layers occur. 
    Furthermore, we ensure that each member of $\mathcal{F}_n$ has the same sorted $x$- and $y$- order. 
    In other words, there exist permutations $\pi_x,\pi_y$ of $[n]$ such that  for each list $L_P \in \mathcal{F}_n$, sorting the list $L_P$ by $x$-coordinate yields $\pi_x$, and sorting it by $y$-coordinate yields $\pi_y$.
    We claim that this suffices to prove the theorem. Indeed, since every $L_P \in \mathcal{F}_n$ has a fixed $x$-order and $y$-order, the integer part of the input is a constant over all $L_P \in \mathcal{F}_n$. Therefore there exists a clairvoyant decision tree for the onion layer decomposition of a given list $L_P \in \mathcal{F}_n$.
    Since the tree has $\Omega((n/4)!)$ leaves, it has depth $\Omega(\log((n/4)!)) = \Omega(n\log n)$. Hence any Real RAM algorithm for the onion layer decomposition requires worst-case runtime $\Omega(n\log n)$.

    The family $\mathcal{F}_n$ is depicted in \cref{subfig:onion-layers-1} and defined as follows: There are four lists $P_1,P_2,P_3,Q$ of $n/4$ points each. 
    \begin{itemize}
        \item The list $P_1$ contains $n/4$ points in order on the open segment between $(-2,-2n-6)$ and $(0,-2n-4)$.
        \item The list $P_2$ contains $n/4$ points in order on the open segment between $(0,0)$ and $(2,-2)$.
        \item The list $P_3$ contains $n/4$ points in order on the open segment between $(2n+2, -2n-2)$ and $(2n+4,-2n-4)$.
        \item Consider the segments $S_i$ between $(2i+1,-2i-1)$ and $(2i+2,-2i)$ for all $i \in [n/4]$. Choose some point $q_i$ from each segment $S_i$ for $i \in [n/4]$, and let $Q = (q_1,\dots,q_{n/4})$.
    \end{itemize}

    We obtain different members of $\mathcal{F}_n$ by choosing different placements of the points $q_i$ on $S_i$. 
    It is clear that each member $L_P \in \mathcal{F}_n$ has the same $x$-order, namely $P_1,P_2,Q,P_3$. 
    By definition the order inside of $P_1,P_2,P_3$ is fixed.
    Note that the order within $Q$ is fixed, and this holds no matter where the points $q_i$ are placed on segments $S_i$. Analogously, the $y$-order of $L_P$ is fixed.

    Finally, we claim that $\mathcal{F}_n$ admits $\Omega((n/4)!)$ onion layer decompositions. For this, let $\pi : [n/4] \to [n/4]$ be an arbitrary permutation. Consider \cref{subfig:onion-layers-2}. We assume that $P_1,P_2,P_3$ are already placed and we have to decide on the placement of the points in $Q$.
    We consider $j_1 := \pi^{-1}(n/4)$. We place the point $q_{j_1}$ on the very end of segment $S_{j_1}$, i.e. at $(2j_1+2,-2j_1)$.
    Next, we consider the convex hull $A_1$, which has four vertices, including $q_{j_1}$. We remove these four vertices and continue recursively.
    Next, we consider $j_2 := \pi^{-1}(n/4 - 1)$. Note that the segment $S_{j_2}$ non-trivially intersects $A_1$. 
    Therefore, we can place some point $q_{j_2}$ in the relative interior of $A_1 \cap S_{j_2}$. The convex hull of the remaining points has again four vertices, including $q_{j_2}$.
    We can continue this pattern, making sure to place point $q_{j_i}$ in the relative interior of
    $A_{i-1} \cap S_{j_i}$. This way, we obtain a point list $(P_1,P_2,P_3,Q)$, with the property that the onion layer decomposition of the corresponding point set is exactly the decomposition displayed in \cref{subfig:onion-layers-2}.
    In particular, the order of $Q$ with respect to the onion layers is exactly given by $\pi$. Since there are $(n/4)!$ choices for $\pi$, there are $\Omega((n/4)!)$ different combinatorial onion layer decompositions.
\end{proof}

\subsection{Decremental Convex Hull}

    We obtain the following corollary concerning the decremental 2D convex hull problem. Here, we are given an initial set $P = \set{p_1,\dots,p_n}$ of $n$ points in 2D. 
    We are given in an online fashion a sequence $(i_1,\dots,i_k)$ of indices, specifying that the point $p_{i_j}$ should be deleted from the current set for $j=1,\dots,k$. Formally, we let $P_0 := P$ and $P_j = P_{j-1} \setminus \set{p_{i_j}}$.
    At each point in time, we require the algorithm to dynamically maintain the 2D convex hull of $P_j$.
    When we say that this problem is \texttt{Presort-hard}, we mean that for $k = \Theta(n)$, no $o(n\log n)$ algorithm can correctly maintain the convex hull of the point set for all $j=1,\dots,k$, even if the presorted $x$- and $y$-order of $P$ is given as part of the input.

    \begin{corollary}
    \label{cor:decremental-convex-hull}
    The decremental 2D convex hull problem is \texttt{Presort-Hard}.
    \end{corollary}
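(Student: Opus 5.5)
We reduce from the combinatorial onion layer decomposition, which is \texttt{Presort-Hard} by Theorem~\ref{thm:onion_layer}. The reduction must be linear time, must use no information beyond the presorting, and must keep the presorting unchanged. Given a 2-presorted point set $(A_x, A_y, \pi)$, we feed the same presorting to an assumed decremental convex hull algorithm $\mathcal{D}$ that runs in $o(n \log n)$ total time for $k = \Theta(n)$ deletions. We then peel the onion ourselves.

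The peeling loop works as follows. We first read off the current hull of $P_0 = P$ from $\mathcal{D}$; this is layer $S_1$. We then issue deletions for every vertex of $S_1$ one by one. After the last deletion of the layer, the hull maintained by $\mathcal{D}$ is exactly the hull of $P \setminus S_1$, so its vertices form $S_2$. We repeat until all points are deleted. The total number of deletions is $n$, so $k = \Theta(n)$, as required.

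To stay within $o(n\log n)$ total time, we need to recover each layer in time proportional to its size rather than the hull's full size repeatedly. We charge the reading of each layer's vertices to the points of that layer, which are deleted immediately afterwards. Every point is therefore read as a layer member exactly once, giving $O(n)$ overhead plus the cost of $\mathcal{D}$.

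The main obstacle is the output model: a decremental structure may report hulls implicitly or as changes only. We handle this by assuming the standard convention that after each update the algorithm exposes the hull, for example as a pointer structure to its vertex list, which can be traversed in time linear in its size. This suffices because we only traverse it when we are about to delete all of its vertices, so the traversal costs are charged as above.

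Alternatively, if $\mathcal{D}$ reports only changes, we maintain the current vertex set via a marker array indexed by point index, updated in $O(1)$ per reported change. The total number of changes is bounded by the algorithm's running time, and every point enters and leaves the hull at most once in a decremental sequence, so the changes number $O(n)$. We then keep the current hull vertices in a list and, at each phase, take the list as the next layer.

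Combining these, an $o(n\log n)$ decremental hull algorithm given $(A_x,A_y,\pi)$ yields an $o(n\log n)$ onion layer algorithm on the family $\mathcal{F}_n$ with a fixed presorting, contradicting the clairvoyant lower bound of Theorem~\ref{thm:onion_layer}. This holds equally for randomised algorithms, via the expected-depth version of the counting argument.
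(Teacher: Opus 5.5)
Your proposal is correct and follows the paper's argument: read off the maintained hull, delete all of its vertices, and repeat, so an $o(n\log n)$ decremental hull algorithm given the presorting would yield an $o(n\log n)$ onion layer algorithm, contradicting Theorem~\ref{thm:onion_layer}. Your additional bookkeeping makes explicit the cost accounting that the paper leaves implicit: you read the hull only at layer boundaries, and you note that each point enters and leaves the hull at most once, so a change-reporting interface produces only $O(n)$ changes.
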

    \begin{proof}
        If the problem had an $o(n \log n)$ algorithm given an $x$- and $y$-presorted point set, we could solve the combinatorial onion layer decomposition in $o(n \log n)$ time. We can compute an onion layer decomposition of $P$, by repeatedly querying the convex hull of $P$, and removing all of its vertices.
    \end{proof}

\subsection{Ordered $k$-closest Pair and Decremental Closest Pair}
In the ordered $k$-closest pair problem, we are given $n$ points $p_1,\dots,p_n$ and some parameter $k$. The problem is to determine from the set $\set{\left\lVert p_i - p_j \right\lVert : i,j \in [n], i\neq j}$ of all pairwise distances the $k$ smallest ones.

\begin{theorem}
\label{thm:k-closest-pair}
    The ordered $k$-closest pair problem is \texttt{Presort-Hard}.
\end{theorem}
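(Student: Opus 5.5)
We will follow the template of \cref{thm:onion_layer}: we construct a family $\mathcal{F}_n$ of point lists such that every member has the same $x$-order and the same $y$-order, but the ordered $k$-closest-pair output (for $k = \Theta(n)$) takes $\Omega((cn)!)$ distinct values across the family. Since the integer part of the input $(\pi_x,\pi_y)$ is then constant on $\mathcal{F}_n$, any Real RAM algorithm yields a clairvoyant ADT for that fixed presorting. This ADT must have $\Omega((cn)!)$ leaves, hence depth $\Omega(n\log n)$. As the table indicates, the underlying idea is a reduction from sorting. We embed $m = n/2$ arbitrary reals $v_1,\dots,v_m$ as the lengths of $m$ "pairs", so that the sorted order of the $k=m$ smallest distances reveals the sorted order of the $v_i$.

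\textbf{Construction.} Take $m$ pairs of points $a_i, b_i$ for $i\in[m]$. Place $a_i = (10 i, -10 i)$ on a descending diagonal with large spacing. Place $b_i = a_i + (\delta_i, -\delta_i)$ with $\delta_i \in (1,2)$ chosen freely, so $|a_ib_i| = \sqrt2\,\delta_i$. Because $b_i$ lies on the same diagonal strictly between $a_i$ and $a_{i+1}$, the $x$-order is $a_1,b_1,a_2,b_2,\dots$ regardless of the $\delta_i$. The $y$-order is the exact reverse, also fixed, so $\pi$ is constant over the family. Every distance between points from different pairs is at least about $10\sqrt2 - 2\sqrt2 > 2\sqrt2$, whereas every within-pair distance is below $2\sqrt2$. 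Hence for $k=m$ the $k$ smallest distances are precisely the $m$ within-pair distances, and the ordered output lists them sorted by $\delta_i$. Ranging over all choices of $(\delta_1,\dots,\delta_m)$ with distinct values, every permutation of $[m]$ occurs as the output order, giving $m! = (n/2)!$ distinct combinatorial outputs.

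\textbf{Conclusion and subtleties.} The main point to verify carefully is that the output is indeed combinatorial and distinguishes the permutations. The output should be a list of index pairs $(i,j)$ in order of distance, which is reasonable since the problem asks to "determine" the $k$ smallest distances. If it only reported the values, the permutation is still recoverable: matching each value to its pair requires a linear number of comparisons, and the ADT lower bound argument applies to the composed tree. We must also check that ties (equal $\delta_i$) are excluded or harmless; restricting to distinct $\delta_i$ suffices, since we only need a lower bound on the worst case. The bound $\log((n/2)!) = \Omega(n\log n)$ then holds for the clairvoyant ADT $\mathcal{A}_\pi$, and therefore for any Real RAM program, even given both presortings. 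Finally, the same lower bound also applies when sortings along any constant number of directions are given, as long as the points stay collinear-monotone. Here that holds for generic directions because all points lie on a single line; the one exception is the direction perpendicular to the line, where the order collapses to ties. We would remark on this to match the paper's comment after the definition of \texttt{Presort-Hard}.
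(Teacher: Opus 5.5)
Your proposal is correct and is essentially the paper's argument. The paper also places pairs of points on a diagonal line so that the $x$- and $y$-orders are fixed while the intra-pair distances encode arbitrary reals; it phrases the conclusion as a reduction from sorting, where you count leaves of a clairvoyant ADT, which amounts to the same thing.

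One caveat: your aside that a values-only output could be matched back to its pairs with linearly many comparisons is false. That matching step alone has $m!$ possible outcomes, so it needs $\Omega(m\log m)$ depth. You should simply rely, as the paper implicitly does, on the output identifying the pairs.
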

\begin{proof}
    We show this using the sorting lower bound. Assume we can solve the ordered $k$-closest pair problem in time $o(n \log n)$ given the presorted point set. Assume further we are given a list of $n$ real numbers $x_1,\dots,x_n \in [0,1]$ which we wish to sort. We consider the line segment $\ell$ from $(0,0)$ to $(n, n)$. 
    We mark $n$ equidistant locations on $\ell$ (see \cref{fig:closest-pair}). In the vicinity of the $i$-th location we place two points $p_i$ and $p_i'$ for $i=1,\dots,n$, so that both $p_i,p_i' \in \ell$ and $\left\lVert p_i - p_i' \right\lVert = \varepsilon x_i$ for some constant small $\varepsilon > 0$ (and $p_i$ is closer to the origin than $p'_i$). Note that both the $x$- order and $y$
    order is constant, independent of the values of $x_1,\dots,x_n \in [0,1]$. So by our assumption we can recover the $n$ closest pairs in $o(n \log n)$ time. But then we have sorted $x_1,\dots,x_n$ in $o(n \log n)$ time, a contradiction.
    \begin{figure}
        \centering
        \includegraphics[width=0.25\linewidth,page=2]{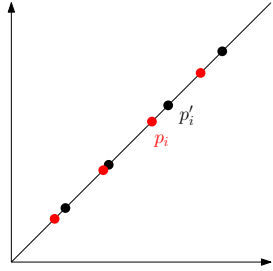}
        \caption{Lower bound construction for ordered $k$-closest pair}
        \label{fig:closest-pair}
    \end{figure}
\end{proof}

\begin{corollary}
\label{cor:decremental-closest-pair}
    The decremental closest pair problem is \texttt{Presort-Hard}.
\end{corollary}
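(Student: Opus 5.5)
We reduce the ordered $k$-closest pair problem to decremental closest pair, so that Theorem~\ref{thm:k-closest-pair} transfers directly. The setting is the one used for Corollary~\ref{cor:decremental-convex-hull}. We are given an initial set $P$ of $n$ points, together with its $2$-presorting $(A_x, A_y, \pi)$, and an online sequence of $k = \Theta(n)$ deletions. After each deletion the structure must report a closest pair of the current set. Presort-hardness means that no algorithm can process all $\Theta(n)$ operations in $o(n \log n)$ total time, even with the presorting as part of the input.

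The plan is to reuse the hard family from the proof of Theorem~\ref{thm:k-closest-pair}. There are $2n$ points $p_i, p_i'$ on the segment from $(0,0)$ to $(n,n)$. Their $x$- and $y$-orders are fixed and independent of the reals $x_1,\dots,x_n \in [0,1]$. Each gap satisfies $\lVert p_i - p_i'\rVert = \varepsilon x_i$, and the spacing between consecutive locations is large compared to $\varepsilon$.

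Suppose a decremental closest pair structure runs in $o(n\log n)$ total time on presorted input. We feed it the presorting of this instance and repeat the following step $n$ times. Query the current closest pair; it is some $\{p_j, p_j'\}$. Output $j$, then delete $p_j$ (one deletion suffices, since $p_j'$ is now isolated at distance about $1$ from every other point).

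\textbf{Key step.} We must check that, as long as some full pair $\{p_i,p_i'\}$ remains, the closest pair is always such a pair. Intra-pair distances are at most $\varepsilon < 1$. Every other distance, including those involving an orphaned $p_j'$, is at least roughly $\sqrt{2} - 2\varepsilon$, because the locations are spaced $\sqrt 2$ apart along the line.

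By induction, the reported indices therefore list $x_1,\dots,x_n$ in nondecreasing order. Ties cause no problem: any tie-breaking rule still produces a valid sorted order. The reduction uses $n$ deletions, $n$ queries and $O(n)$ extra work, so we would sort $n$ reals in $o(n\log n)$ time. Because the integer part of the input is constant across the family, this contradicts the clairvoyant-ADT sorting lower bound invoked in Theorem~\ref{thm:k-closest-pair}.

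The main obstacle is pinning down the model precisely: the presorting is given only for the initial set, and queries are interleaved with deletions. Once that is fixed, the only real verification is the distance-separation claim in the key step, which follows from choosing $\varepsilon$ small, e.g.\ $\varepsilon < 1/4$.
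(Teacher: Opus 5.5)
Your proof is correct and follows the paper's argument. You reuse the construction from Theorem~\ref{thm:k-closest-pair}, repeatedly take the current closest pair (necessarily an intact pair $\{p_j,p_j'\}$) and delete, which reads off $x_1,\dots,x_n$ in sorted order. The only difference is that the paper deletes both points of each reported pair, which avoids the orphan-separation check you need when deleting only $p_j$; both variants use $\Theta(n)$ deletions and are equally valid.
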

\begin{proof}
    By the same construction as in \cref{thm:k-closest-pair}. Once we have obtained the closest pair, we delete both points of the pair. This way we sort $x_1,\dots,x_n$. Hence the decremental closest pair problem cannot be solved in $o(n \log n)$ time, even if the point set is preordered in both $x$- and $y$-direction.
\end{proof}

\section{Conclusion and Open Problems}

We conclude by mentioning some possible avenues for future work.

The most approachable open question appears to be derandomising our main result for constructing Quadtrees.
We note that this is the only barrier to derandomising the proximity structures: NN-graphs, Delaunay Triangulations, Euclidean MSTs, Voronoi diagrams, and finding the Maximum Empty Circle,
since the reductions for these problems are deterministic~\cite{Loffler2012Triangulating}.

Another interesting open question is whether the time complexity for constructing Quadtrees
(and hence all the other problems as well) can be improved to $o(n\sqrt{\log n})$.
The main barrier of our construction is the use of orthogonal range predecessor search,
which currently requires $O(n\sqrt{\log n})$ pre-computation time, even on Word RAM.
An improved result for orthogonal range predecessor search on Word RAM with the same word-size properties would hence imply an improved result on Real RAM for all of these problems.

We also ask whether we could obtain a similar result for Quadtrees or other problems
in higher dimensions.
We note that one possible avenue for this would be to construct a data structure
for the Range Minimum Query problem over points of magnitude at most $O(n)$ in $d$-dimensions (for $d>2$),
with preprocessing time $o(n\log n)$ and query time $o(\log n)$.

\bibliographystyle{plainurl}
\bibliography{refs}

\appendix

\section{Recursive Inequality}
\label{sec:recursive_inequality}
The central algorithm in our paper is analyzed using the following recursive inequality. 
Consider a recursive procedure, that operates over an array of size $N$. For convenience of notation, let in this section denote $[a,b] := \set{a, a+1,\dots, b}$. When called on a subarray $[a,b]$ of size $n := b - a + 1$, the procedure selects some splitting index $a \leq i < b$, and is afterwards called on the subarrays $[a, i]$ and $[i+1, b]$. 
The splitting index $i$ depends on $a,b$ in some arbitrary fashion.
We are interested in the runtime $T(N)$ of the procedure. Let $n := b-a+1$ and $n_1 := i - a + 1$ and $n_2 = b - i$ be the sizes of the involved subarrays. 
We assume that during each recursive call the additional work done by the procedure depends only on the \emph{minimum} of $n_1, n_2$ times a small multiplicative overhead factor $M$. That is, we have $T(1) = O(1)$ and  for some function $f : \N \to \N$ and some $M > 0$ for all $n > 1$
\begin{equation}
    T(n) \leq T(n_1) + T(n_2) + M \cdot f(\min(n_1,n_2)).
    \label{eq:recursive-inequality}
\end{equation}

In our main theorem, we will let $M = \log^\varepsilon N$.
A technical drawback of the above equation is that we can only consider such functions where each recursive call handles one contiguous subarray. 
For our main theorem, we require a  slightly more general setting. Namely, assume that $T : \N \to \R_{\geq 0}$ is a function that satisfies $T(1) = O(1)$ and for all $n > 1$
\begin{equation}
    T(n) \leq \max_{1 \leq i \leq n-1} T(i) + T(n-i) + Mf(\min(i, n-i)).
    \label{eq:recursive-inequality-2}
\end{equation}
The following lemma is a known fact \cite{BuchinMulzer2011Delaunay}, but we repeat its proof here for convenience of the reader. 
The main insight is that assuming $f$ is strictly sublinear, the runtime $T(n)$ is surprisingly low, no matter how unbalanced the splitting indices are chosen.

\begin{lemma}
\label{lem:recursive-inequality-deterministic}
    If $T(n)$ satisfies \eqref{eq:recursive-inequality} or \eqref{eq:recursive-inequality-2} and $f(x) \leq Cx^\alpha$ for some constants $0 <C, 0 < \alpha < 1$, then $T(n) \leq M \cdot O(n)$.
\end{lemma}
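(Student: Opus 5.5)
I would prove the lemma by strong induction on $n$, with a strengthened hypothesis of the form
\[
T(n) \leq M\,(a n - b n^{\beta}) - c \quad\text{for all } n \geq 1 ,
\]
where $\alpha < \beta < 1$ is fixed (say $\beta = (1+\alpha)/2$) and $a, b, c$ are constants still to be chosen. The subtracted sublinear term is the slack that pays for the overhead $M f(\min(i,n-i))$ at each split. It suffices to treat \eqref{eq:recursive-inequality-2}. Indeed, \eqref{eq:recursive-inequality} is the special case where the index $i$ is prescribed: with $n_1 = i$, $n_2 = n-i$, its right-hand side is at most the maximum over all $i$. So I would prove the bound for every $T$ satisfying the max-recurrence. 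Without loss of generality I would assume $M \geq 1$, or else absorb the constant $T(1) = O(1)$ into a constant times $M$, since the claimed bound is $M \cdot O(n)$.

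For the inductive step I fix $n > 1$ and an arbitrary split $i \in [1, n-1]$, and write $m = \min(i, n-i)$ and $n - m \geq n/2$. Applying the hypothesis to both parts gives
\[
T(i) + T(n-i) + M C m^{\alpha} \leq M\bigl(a n - b(m^{\beta} + (n-m)^{\beta})\bigr) - 2c + M C m^{\alpha}.
\]
It then suffices to show
\[
b\bigl(m^{\beta} + (n-m)^{\beta} - n^{\beta}\bigr) \geq C m^{\alpha},
\]
because the extra $-c$ only helps. The key estimate is a concavity bound. By the mean value theorem, $n^{\beta} - (n-m)^{\beta} \leq \beta m (n-m)^{\beta - 1}$, and since $n - m \geq m$ this is at most $\beta m^{\beta}$. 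Hence
\[
m^{\beta} + (n-m)^{\beta} - n^{\beta} \geq (1-\beta)\, m^{\beta} \geq (1-\beta)\, m^{\alpha},
\]
using $m \geq 1$ and $\beta > \alpha$. Choosing $b \geq C/(1-\beta)$ closes the step. Since the inequality holds for every $i$, it also holds for the maximum.

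The base case and positivity come last. I need $T(1) \leq M(a - b) - c$. I would pick $c$ to absorb $T(1)$ (using $M \geq 1$), and pick $a \geq b + c + T(1)$. Note that $a n - b n^{\beta} \leq a n$, so the conclusion $T(n) \leq a M n = M \cdot O(n)$ follows immediately.

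The step I expect to be the main obstacle is the concavity inequality: it has to hold uniformly over both the balanced regime ($m \approx n/2$) and the very unbalanced regime ($m = 1$). The mean-value bound above handles both cases, because it compares against $m^{\beta}$ rather than $n^{\beta}$. I would double-check that it does not silently need $m \leq (n-m)$ in a stronger form than $n - m \geq m$, which holds by the definition of $m$.
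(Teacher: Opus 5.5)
Your proof is correct, and it takes a different route from the paper's.

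\textbf{The paper's route.} The paper uses a charging argument. It first reduces \eqref{eq:recursive-inequality-2} to \eqref{eq:recursive-inequality} by building an auxiliary procedure on contiguous subarrays that always picks the maximising split. It then charges the work $Mf(\min(n_1,n_2))$ of each call uniformly to the elements of the smaller side. An element is charged only when it lies on the smaller side, so the charged intervals containing it at least double in size. Its total charge is therefore a geometric series $\sum_i MC2^{i(\alpha-1)} = O(M)$.

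\textbf{Your route.} You strengthen the induction hypothesis with a sublinear slack term $-bMn^\beta$. The key step is the concavity estimate $m^\beta + (n-m)^\beta - n^\beta \ge (1-\beta)m^\beta$ for $m \le n-m$, which plays the role of the paper's ``smaller half'' observation. Your mean-value argument proves it uniformly in $m$, including the unbalanced case $m=1$. A small remark: $\beta > \alpha$ is not needed, since $\beta = \alpha$ with $b \ge C/(1-\alpha)$ works equally well.

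\textbf{What each approach buys.} Your route handles the max-recurrence \eqref{eq:recursive-inequality-2} directly. The link between the two recurrences is then just the trivial fact that \eqref{eq:recursive-inequality} implies \eqref{eq:recursive-inequality-2}, with no auxiliary procedure. You also treat the base case explicitly, which exposes the assumption $M \ge 1$ (true in the application $M = \log^\varepsilon N$). The paper's identity $T(N) = \sum_{S} Mf(\ell_S)$ uses this assumption silently by dropping the $O(N)$ cost of the leaves.

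In return, the paper's charging argument gives an explicit sum over the recursion tree. That is what makes the subsequent randomised corollary a one-line use of linearity of expectation. With your approach, you would instead note that, when the splits do not depend on the random bits, $\E[T]$ satisfies the same recurrence with $f$ replaced by $Cx^\alpha$, and then apply your lemma to $\E[T]$.
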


\begin{proof}
    First, we argue that we can w.l.o.g. only consider some function $T(n)$ defined over contiguous subarrays such that \eqref{eq:recursive-inequality} is satisfied. We claim that this implies the more general case of \eqref{eq:recursive-inequality-2} as well. 
    Indeed, assume we have some function $T(n) : \N \to \R_{\geq 0}$ that satisfies \eqref{eq:recursive-inequality-2}. For each $n$, there exists some $i_n$ that maximizes the right-hand-side of \eqref{eq:recursive-inequality-2}.  
    We imagine the recursive procedure $T'$ operating over the array $[1,N]$, such that whenever this procedure is called on some subarray $[a,b]$ of size $n = b-a+1$, it decides which splitting index to select only based on $n$.
    Namely, it selects splitting index $a + i_n - 1$. 
    One can then prove by induction that $T'$ is an upper bound for $T$ and satisfies \eqref{eq:recursive-inequality}. Hence it suffices to prove our theorem only for recursive procedures defined on contiguous subarrays satisfying \eqref{eq:recursive-inequality}.

    Let now $T(n)$ be some function that satisfies \eqref{eq:recursive-inequality}.
    We imagine the procedure running on the complete array $[1,N]$. Every time the procedure is called on some subarray $[a,b]$, we have to perform work $Mf(\min(n_1, n_2))$ where $n_1,n_2$ are determined by the splitting index $i$ corresponding to that subarray $[a,b]$.
    We consider a charging argument. If $n_1 < n_2$, we charge the total work $Mf(n_1)$ to the left half $[a,i]$ of the array. 
    More precisely, we charge each of the $n_1$ individual elements of $[a, i]$ with a charge of $Mf(n_1)/n_1$.
    In the other case $n_2 \geq n_1$, we distribute the total work $Mf(n_2)$ by charging each 
    of the $n_2$ individual elements of $[i+1,b]$ with a charge of $Mf(n_2)/n_2$. Let $\mathcal{S}$ be the set of all subarrays that are charged when computing the function for the complete array $[1,N]$, and let for each subarray $S \in \mathcal{S}$ denote $\ell_S$ the number of elements of that subarray.
    Then, by the above explanation
    \[
    T(N) = \sum_{S \in \mathcal{S}}Mf(\ell_S) = \sum_{S \in \mathcal{S}}\sum_{i=1}^{\ell_S} Mf(\ell_S)/\ell_S.
    \]
    We now count this quantity in a different way. Consider some element $j \in [1, N]$ that was charged at least once and let $\mathcal{S}(j) := \set{S \in \mathcal{S} : j \in S}$ be the set of corresponding subintervals that were charged whenever $j$ was charged.
    W.l.o.g. let $\mathcal{S}(j) = \set{S_0,\dots, S_k}$ for some $k$, where $\ell_{S_k} \geq \ell_{S_{k-1}} \geq \dots \geq \ell_{S_0}$.
    Note that every time $j$ gets charged, it must be on the smaller side of the corresponding subinterval. This means that if $j$ is charged twice in succession, its subinterval must shrink by a factor of 2 or more.
    In other words, $\ell_{S_i} \geq 2\ell_{S_{i-1}}$ for all $i = 1,\dots,k$. Since $\ell_{S_0} \geq 1$ we have $\ell_{S_i} \geq 2^i$. Then we can compute the total charge by summing the contribution of each element $j$.
    \begin{align*}
        T(N) & = \sum_{j=1}^N\sum_{S \in \mathcal{S}(j)}Mf(\ell_S)/\ell_S & \leq \sum_{j=1}^N\sum_{S \in \mathcal{S}(j)} MC\ell_S^{\alpha -1 } \leq \sum_{j=1}^N\sum_{i=0}^\infty MC2^{i(\alpha - 1)}\\
        && = \sum_{j=1}^NMC \frac{1}{1 - 2^{\alpha - 1}} = M \cdot O(N).
    \end{align*}
    This was to show.
    \end{proof}

\begin{corollary}
 If $T(n)$ satisfies \eqref{eq:recursive-inequality} or \eqref{eq:recursive-inequality-2}, and $f$ is the runtime of a randomized procedure with $\E[f(x)] \leq Cx^\alpha$, then $\E[T(n)] \leq M\cdot O(n)$.
\end{corollary}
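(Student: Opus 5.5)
The plan is to rerun the charging argument of Lemma~\ref{lem:recursive-inequality-deterministic} in expectation, using linearity of expectation. The central point is that the recursion tree itself (which subarrays arise and where each split occurs) is determined by the input geometry. In our application it does not depend on the random skip-list heights; only the cost of each node is random. So I first fix the split structure, as in the reduction of \eqref{eq:recursive-inequality-2} to \eqref{eq:recursive-inequality} in the proof of Lemma~\ref{lem:recursive-inequality-deterministic}. This gives a deterministic collection $\mathcal{S}$ of charged subarrays, and I write the total cost as
\[
T(N) \le \sum_{S \in \mathcal{S}} M \cdot f_S(\ell_S) + O(N),
\]
where $f_S$ is the random cost of the procedure invoked at the split that charges $S$. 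The $O(N)$ term accounts for the base cases and the constant terms.

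Taking expectations and applying linearity gives $\E[T(N)] \le \sum_{S} M\,\E[f_S(\ell_S)] + O(N) \le \sum_S MC\ell_S^{\alpha} + O(N)$. This requires only the marginal bound $\E[f(x)]\le Cx^\alpha$ at each node; no independence between nodes is needed. Dependence is in fact present, because all nodes share the same sampled sets $P_i$. The right-hand side is now exactly the deterministic quantity bounded in Lemma~\ref{lem:recursive-inequality-deterministic}. I redistribute it as a charge of $MC\ell_S^{\alpha-1}$ to each element of $S$. Each element lies in charged subarrays whose sizes at least double along its chain, so its total charge is at most $MC/(1-2^{\alpha-1})$. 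Summing over the elements gives $M\cdot O(N)$.

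The main obstacle is justifying that the recursion structure is not random, or handling the case where it is. If the split indices could depend on the randomness, the set $\mathcal{S}$ would be random and I could not sum a fixed collection. As a fallback, I would use the max-form \eqref{eq:recursive-inequality-2} directly: define $\bar T(n)$ as the supremum of $\E[T]$ over all instances of size $n$. Conditioning on the first split, whose cost has expectation at most $MCn_{\min}^\alpha$, gives $\bar T(n) \le \max_i \bar T(i)+\bar T(n-i)+MC\min(i,n-i)^\alpha$. This holds provided the expected cost of the subsequent subproblems given the first split is still bounded by $\bar T$. Lemma~\ref{lem:recursive-inequality-deterministic} then applies to $\bar T$ with $f(x)=Cx^\alpha$ and yields $M\cdot O(n)$.

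Finally, the application in Theorem~\ref{thm:quadtree} has $f(x)=1+\log x$, which is $O(x^{\alpha})$ for, say, $\alpha=1/2$, and $M=\log^{\varepsilon}n$. This confirms that the corollary delivers the claimed $O(n\log^\varepsilon n)$ expected bound after preprocessing.
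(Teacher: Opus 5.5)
Your proposal is correct and follows the paper's proof. You write $T(N)$ as the sum of $Mf(\ell_S)$ over the charged subarrays $S\in\mathcal{S}$, use linearity of expectation to replace each term by $MC\ell_S^{\alpha}$ (needing no independence, as you note, even though all nodes share the sampled sets $P_i$), and conclude with the deterministic charging bound from \cref{lem:recursive-inequality-deterministic}.

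Your explicit observation that $\mathcal{S}$ is fixed by the geometry rather than by the skip-list heights is exactly what the paper's phrase ``no matter which splitting indices are chosen'' tacitly relies on. So the fallback is unnecessary; it would in any case need conditional rather than merely marginal bounds on $f$.
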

\begin{proof}
    This follows from \cref{lem:recursive-inequality-deterministic} and the linearity of expectation. More precisely, even if the running times $T(n)$ and $f(x)$ are random variables, it is still true that $T(n) = \sum_{S \in \mathcal{S}} Mf(\ell_S)$. Then we have
     \begin{align*}
        \E[T(N)] & = \sum_{S \in \mathcal{S}}M\E[f(\ell_S)] \leq \sum_{S \in \mathcal{S}} MC\ell_S^{\alpha} \leq M\cdot O(N),
    \end{align*}
    where the last inequality follows from the same charging argument, and is true, no matter which splitting indices are chosen.
\end{proof}

\section{Further related work}
\label{app:closely}

We briefly mention additional lines of related work.
Recently, Eppstein, Goodrich, Illickan, and To~\cite{EppsteinEtAlCCCG2025Entropy} presented algorithms for computing the convex hull whose running time depends on the degree of presortedness of the input.
Van der Hoog, Rotenberg, and Rutschmann~\cite{vanDerHoogRustenmann2025TightUniversal} showed that these bounds are tight by establishing matching lower bounds based on an entropy measure of the partial order.
Presorting can also be viewed as a form of \emph{advice} about the input.
One may distinguish between \emph{predictions}, which are approximate hints about the output, and \emph{advice}, which is guaranteed to be correct.
There is a long tradition on geometric algorithms with predictions~\cite{Melhorn2996Predictions} and we highlight the recent work of Cabello, Chan, and Giannopoulos~\cite{cabello2026Predictions}, who construct a Delaunay triangulation given a predicted triangulation that is close to the true one in terms of flip distance.
Regarding \emph{advice}, there are relatively few geometric examples. 
One notable exception is the framework of \emph{imprecise geometry}, introduced by Held and Mitchell~\cite{held2008triangulating}, in which each input point $p_i$ is known only to lie within a region $R_i$.
Such regions may be viewed as advice that constrains the space of feasible inputs.
Within this framework, efficient algorithms are known for many of the proximity structures discussed above, including convex hulls, Delaunay triangulations, and related constructions~\cite{deberg2025ImpreciseConvex,devillers2011delaunay,evans2011possible,ezra2013convex,held2008triangulating,loffler2010delaunay,loffler2013unions,loffler2025preprocessing,van2010preprocessing,van2019preprocessing,van2022preprocessing}.
Finally, since many geometric lower bounds are obtained via reductions from sorting, it is worth discussing sorting itself. 
Sorting with predictions—such as long approximately sorted subsequences—has been investigated in several models~\cite{Bai2023Predictions,Castro1992AdaptiveSorting,auger2019}, while sorting with advice in the form of partial orders or guaranteed comparisons has also received considerable attention~\cite{cardinal_sorting_2013,Haeupler25,van2024tight,van2025simpler}.

\end{document}